\documentclass[letterpaper, 10 pt, conference]{ieeeconf}
\IEEEoverridecommandlockouts
\usepackage[T1]{fontenc}

\usepackage{cite}
\usepackage{amsmath,amssymb,amsfonts}
\allowdisplaybreaks
\usepackage{bm}
\usepackage{algorithm}
\usepackage{algpseudocode}
\usepackage{graphicx}
\usepackage{textcomp}
\usepackage{xcolor}
\usepackage{tabularx}
\usepackage{graphicx}
\usepackage{array}

\usepackage{subcaption}
\usepackage{adjustbox}
\usepackage{mathtools}
\usepackage{balance}
\usepackage{hyperref}
\usepackage{authblk}
\usepackage{siunitx}

\usepackage{icpslab} 

\def\Bc{{\mathcal B}}

\def\Dc{{\mathcal D}}

\def\Hc{{\mathcal H}}

\def\Lc{{\mathcal L}}

\def\Nbb{{\mathbb N}}

\def\pbf{{\mathbf p}}

\def\Rc{{\mathcal R}}
\def\Rbb{{\mathbb R}}

\def\Sc{{\mathcal S}}

\def\ubf{{\mathbf u}}

\def\xbf{{\mathbf x}}

\def\ybf{{\mathbf y}}

\def\pa{{\partial}}

\def\Om{{\Omega}}

\def\0{{\bf 0}}

\newcommand{\bitem}{\begin{itemize}}
\newcommand{\eitem}{\end{itemize}}
\newcommand{\btabular}{\begin{tabular}}
\newcommand{\etabular}{\end{tabular}}
\newcommand{\bcenter}{\begin{center}}
\newcommand{\ecenter}{\end{center}}
\newcommand{\bea}{\begin{eqnarray}}
\newcommand{\eea}{\end{eqnarray}}
\newcommand{\bean}{\begin{eqnarray*}}
\newcommand{\eean}{\end{eqnarray*}}

\newcommand{\ba}{\left. \begin{array}}
\newcommand{\ea}{\\ \end{array} \right.}
\newcommand{\bab}{\left[ \begin{array}}
\newcommand{\eab}{\\ \end{array} \right]}
\newcommand{\bap}{\left( \begin{array}}
\newcommand{\eap}{\\ \end{array} \right)}
\newcommand{\bbm}{ \begin{bmatrix}}
\newcommand{\ebm}{\\ \end{bmatrix} }
\newcommand{\bear}{\begin{array}}
\newcommand{\eear}{\\ \end{array}}

\newcommand{\bs}{\boldsymbol}

\newcommand{\ra}{\rightarrow}

\font\myownfont=cmr17 scaled \magstep5
\def\psfancypar#1#2{\def\biginitial#1{{\myownfont#1}}%
  \def\makeinitial#1{\setbox8\hbox{\strut\vbox to 1.3ex
    {\hbox{\biginitial#1}\vskip -4pc plus 3.5pc minus 3.5pc}}}%
  \makeinitial#1%
  \ifdim\parindent>1.3\wd8\dimen8=\parindent
     \else\dimen8=1.3\wd8\fi
  \hangindent=\dimen8\hangafter=-2
  \noindent
  \strut\hskip-1\dimen8\box8{\sc#2}}%

\newcounter{subequation}
\def\beasub{\addtocounter{equation}{+1}
\setcounter{subequation}{\value{equation}}
\setcounter{equation}{0}
\renewcommand{\theequation}{\arabic{subequation}\alph{equation}}
\begin{eqnarray}}
\def\eeasub{\end{eqnarray}
\setcounter{equation}{\value{subequation}}
\renewcommand{\theequation}{\arabic{equation}}}

\newtheorem{theorem}{Theorem}
\newtheorem{lemma}[theorem]{Lemma}
\newtheorem{proposition}[theorem]{Proposition}

\newtheorem{definition}{Definition}
\newtheorem{remark}{Remark}
\newtheorem{assumption}{Assumption}

\newcommand{\xeq}{\xbf_\mathrm{eq}}

\newcommand{\gradH}{\nabla_{\!\xbf} H}
\newcommand{\FNN}{\mathbf{NN}_H}

\makeatletter
\newcommand{\linebreakand}{%
\end{@IEEEauthorhalign}
\hfill\mbox{}\par
\mbox{}\hfill\begin{@IEEEauthorhalign}
}
\makeatother

\title{Structure- and stability-preserving learning-based model order reduction\\for port-Hamiltonian systems}

\begin{document}
\title{Structure- and Stability-Preserving Learning of Port-Hamiltonian Systems}
\author{Binh Nguyen, Nam T. Nguyen and  Truong X. Nghiem
\thanks{This material is based upon work supported by the U.S.\ National Science Foundation under Grant No.~2514584.}
\thanks{All authors are with the Department of Electrical and Computer Engineering, College of Engineering and Computer Science, University of Central Florida, Orlando, FL 32816, USA.}
}
\maketitle

\begin{abstract}
This paper investigates the problem of data-driven modeling of port-Hamiltonian systems while preserving their intrinsic Hamiltonian structure and stability properties. 
We propose a novel neural-network-based port-Hamiltonian modeling technique that relaxes the convexity constraint commonly imposed by neural network-based Hamiltonian approximations, thereby improving the expressiveness and generalization capability of the model. 
By removing this restriction, the proposed approach enables the use of more general non-convex Hamiltonian representations to enhance modeling flexibility and accuracy.
Furthermore, the proposed method incorporates information about stable equilibria into the learning process, allowing the learned model to preserve the stability of multiple isolated equilibria rather than being restricted to a single equilibrium as in conventional methods. 
Two numerical experiments are conducted to validate the effectiveness of the proposed approach and demonstrate its ability to achieve more accurate structure- and stability-preserving learning of port-Hamiltonian systems compared with a baseline method.

\end{abstract}

\section{Introduction}
Port-Hamiltonian systems (PHS) provide a powerful framework for modeling, analysis, 
and control of complex physical systems by emphasizing the flow and storage of energy 
\cite{vanderschaftPortHamiltonianSystemsNetwork2004,    
rashadTwentyYearsDistributed2020}. 
Originating from classical Hamiltonian mechanics, PHS extends the Hamiltonian concept by incorporating external interactions through ports, allowing explicit representation of energy exchange with the environment.  This formulation inherently preserves fundamental physical properties, such as passivity and conservation laws, making it particularly suitable for multidisciplinary applications involving interconnected subsystems.

Applications of PHS span various fields, including mechanical systems, 
electrical circuits, fluid dynamics, robotics, and thermal systems
\cite{zhongPortHamiltonianControlFramework2022,
rashadEnergyAwareImpedance2022,
altawaitanHamiltonianDynamicsLearning2024
}. 
Their versatility enables engineers and scientists to systematically represent complex, 
interconnected structures, ensuring physically consistent modeling that simplifies system 
integration and facilitates robust controller design.
A port-Hamiltonian neural network (PHNN) is a neural network (NN) based machine learning framework inspired 
by PHS \cite{desaiPortHamiltonianNeuralNetworks2021, eidnesPseudoHamiltonianNeuralNetworks2023,
duongPortHamiltonianNeuralODE2024}.
By integrating NNs with port-Hamiltonian structures, PHNNs explicitly encode energy conservation and dissipation principles, enabling robust modeling and 
prediction of dynamical behaviors and enhancing stability, interpretability, and generalization capabilities.
In particular, an appropriate composition of multiple PHNNs can capture large-scale, complex dynamical behaviors while preserving energy conservation principles.
This makes them particularly valuable for applications involving mechanical systems \cite{o2025port}, electrical systems \cite{bartel2022port}, and control systems
\cite{nageshrao2015port}, where accurate modeling of energy transfer and system interactions is essential.

Since the Hamiltonian of a PHS is typically unknown and cannot be directly sampled, PHNNs provide a promising data-driven approach for learning PHS models solely from data consisting of state variables and system inputs \cite{rettberg2025data,cherifi2025nonlinear}.
However, to the best of our knowledge,
limited progress has been made in making PHNNs 
accurately preserve stable equilibria of the original system in the learned model.
Preserving stable equilibria is crucial as they characterize long-term behaviors of the system.
In many physical systems, stable equilibria correspond to desired operating points, such as rest configurations or energy-minimizing states.

Accurately capturing these equilibria ensures that the learned model reproduces correctly steady-state behaviors of the original system;
otherwise, the learned model may exhibit qualitatively incorrect dynamics.
Existing methods often leverage input convex neural networks (ICNNs) \cite{amos2017input} to construct the Hamiltonian function and attempt to enforce (approximately) a zero gradient of the Hamiltonian at a stable equilibrium. 
Despite their demonstrated advantages in enforcing the convexity of the Hamiltonian \cite{sanchezescalonilla2024robustneuralidapbcpassivitybased, rothStablePortHamiltonianNeural2025}, ICNN-based methods are not applicable in many real-world systems, where the Hamiltonian function is non-convex and may possess multiple isolated stable equilibria.

Addressing the above challenge, this paper presents a learning-based port-Hamiltonian model in which \emph{the stability of multiple equilibria and the port-Hamiltonian structure are preserved}.
Our main contributions are summarized below.
\begin{itemize}
    \item We propose a novel neural Hamiltonian function that ensures stability at multiple stable equilibria of the PHS, 
    overcoming a critical drawback of convex Hamiltonian functions typically assumed in existing methods.
    \item We provide a rigorous stability analysis of the proposed learning-based port-Hamiltonian model, showing that the stable equilibria are preserved. 
    \item Two numerical examples are presented to validate the accuracy of the proposed model and its effectiveness in preserving stable equilibria, and to compare its performance with an ICNN-based method.
\end{itemize}

The rest of this paper is structured as follows. 
Section~\ref{sec:problem} presents an overview of PHS and its equilibria.
Section~\ref{sec:method} introduces the proposed method and describes the learning procedure.
In Section~\ref{sec:stability}, we theoretically analyze 
the stability of the proposed model.
Next, two numerical examples are investigated in Section~\ref{sec:result}. 
Finally, Section~\ref{sec:conclusion} concludes the paper and presents future work.

\noindent{\bf Notations:}
$\Nbb$, $\Rbb$, $\Rbb_{>0}$, and $\Rbb_{\geq 0}$ are the sets of natural, real, positive real, and non-negative real numbers, respectively.
$\Vert \cdot \Vert_2$ stand for the $\ell_2$ norm. 
For two symmetric matrices with the same dimensions, $A (\succeq)\succ B$ implies that $A-B$ is a positive (semi-)definite matrix.
For a square matrix $A$, $\tr(A)$ is the sum of the elements in its main diagonal.
$\calB(\xbf, \varepsilon) = \{ \ybf \,|\, 
\Vert \ybf - \xbf \Vert_2 \leq \varepsilon\}$ is a ball centered at $\xbf$ with radius $\varepsilon > 0$.

\section{Preliminary and Problem Statement}
\label{sec:problem}

\subsection{Port-Hamiltonian systems (PHS)}
Consider a nonlinear PHS \cite{schaftPortHamiltonianSystemsTheory2014} in the following form
\begin{equation}
  \begin{cases}
    \dot \xbf = (J(\xbf)-R(\xbf))     \displaystyle
    \gradH(\xbf) + G(\xbf) \ubf, &
    \\
    \ybf = G^\top(\xbf)     \displaystyle
    \gradH(\xbf), &
  \end{cases}
  \label{eq:PHS}
\end{equation}
where $\xbf \in \Rbb^n$, $\ubf \in \Rbb^m$, and $\ybf \in \Rbb^m$ are the state, input, and output vectors, respectively.
$H: \Rbb^n \ra \Rbb_{\geq 0}$ is the Hamiltonian function representing the system's internal energy with respect to state $\xbf$.
$J(\xbf) = -J^\top(\xbf) \in \Rbb^{n\times n}$ is the structure matrix of the interconnection among the energy storage elements in the system.
$0 \preceq R(\xbf) = R^\top(\xbf) \in \Rbb^{n\times n}$ is the dissipative matrix of resistive structure representing energy losses in the system.
And $G(\xbf) \in \Rbb^{n\times m}$ is the port matrix describing how energy enters and exits the PHS.
From a system-theoretic perspective, the PHS is dissipative with respect to the supply rate $\ubf^\top \ybf$, with the Hamiltonian $H(\xbf)$ as an energy storage function
\begin{align}
\dot H(\xbf) &= 
\begin{multlined}[t]
\nabla_\xbf^\top H(\xbf) (J(\xbf)-R(\xbf)) \gradH(\xbf) \\ 
+ \nabla_\xbf^\top H(\xbf) G(\xbf) \ubf    
\end{multlined}
\nonumber
\\
&= - \nabla_\xbf^\top H(\xbf) R(\xbf) \gradH(\xbf)  + \ubf^\top \ybf,
\label{eq:energy}
\end{align}
where the last equality follows from the fact that the quadratic form of the skew-symmetric matrix $J(\xbf)$ vanishes.

\subsection{Stability and stability-preserving in PHS}

We define the stability of an equilibrium of a PHS by considering the autonomous PHS obtained from \eqref{eq:PHS} by setting $\ubf = \mathbf{0}$ as
\begin{equation}
  \begin{cases}
    \dot \xbf = (J(\xbf)-R(\xbf))     \displaystyle
    \gradH (\xbf), &
    \\
    \ybf = G^\top(\xbf)     \displaystyle
    \gradH(\xbf). &
  \end{cases}
  \label{eq:aPHS}
\end{equation}

\begin{definition}[Equilibrium point] $\xeq$ is said to be an equilibrium point of the autonomous PHS \eqref{eq:aPHS} if
$(J(\xeq)-R(\xeq))\nabla_\mathbf{x}H(\xeq) = \mathbf{0}$. 
\label{def:equilibria}
\end{definition}
\begin{definition}[Lyapunov stability] 
The autonomous system \eqref{eq:aPHS} has a (uniformly) stable equilibrium at $\xeq$ if for all $\epsilon > 0$, there exists $\delta(\epsilon) > 0$ such that
$\Vert \xbf(t_0) - \xeq \Vert_2 < \delta(\epsilon)$
implies
$\Vert \xbf(t) - \xeq \Vert_2 \leq \epsilon$, $\forall t > t_0$.
Additionally, if $\lim_{t\ra \infty} \Vert \xbf(t) -\xeq \Vert_2 = 0$, $\xeq$ is 
asymptotically stable. 
\label{def}
\end{definition}

Since $\ubf= {\bf 0}$, 
no external energy is supplied to the system and dissipativity implies that the system cannot increase its stored energy.
Therefore, the Hamiltonian satisfies the dissipation inequality $\dot{H}(\xbf) \leq 0$.
As a result, the stability of an equilibrium of \eqref{eq:aPHS} can be determined via 
Lyapunov stability theory by finding a suitable Lyapunov function.
In a PHS, the Hamiltonian function $H(\xbf)$ is often chosen as the Lyapunov function
\cite{schaftPortHamiltonianSystemsTheory2014}. 
Moreover, if the dissipation matrix satisfies $R(\xbf) \succ 0$ in a neighborhood of the equilibrium, then the equilibrium is asymptotically stable.
In addition, a PHS may possess multiple stable equilibria, for instance the chain pendulum system \cite{lee2012dynamics} and the mass–spring chain system \cite{ge2004position}.
The stability of these equilibria is generally guaranteed only locally.
Therefore, employing an ICNN for the Hamiltonian function is not suitable for capturing behavior of the PHS at all stable equilibria.

In this paper, the Hamiltonian function $H(\xbf)$ is unknown and the following assumption is adopted.

\begin{assumption}
\label{ass_hamil}
The Hamiltonian function 
\(H(\mathbf{x})\) is continuously differentiable, and 
the stable equilibria of the PHS~\eqref{eq:PHS} are known.
\end{assumption}

The stable equilibrium of the PHS \eqref{eq:PHS} can be identified from physical insights or experimental observations, as trajectories starting from a stable equilibrium return to its neighborhood under a small input perturbation.

\noindent
\textbf{Problem statement:}
This paper develops a learning-based port-Hamiltonian modeling framework that (i) preserves the intrinsic PHS structure, (ii) guarantees stability at known stable equilibria of the system, and (iii) relaxes the restrictive convexity requirement on the Hamiltonian function, while maintaining high modeling accuracy.

\section{Data-Driven Structure- and Stability-Preserving PHS}
\label{sec:method}

This section presents our method for learning a structure- and stability-preserving PHS from data.
We first state a sufficient condition for a stable equilibrium, then present our method for learning a PHS and preserving multiple stable equilibria of the system.

\subsection{Condition for a stable equilibrium}

The following results are useful for investigating the stability of equilibria.
\begin{definition}
\label{asm:fonc}
Let $H:\mathbb{R}^n \to \mathbb{R}$ be continuously differentiable. Then $\mathbf{x}_{\mathrm{eq}}$ is a strict minimum of $H$ if 
there exists $\varepsilon>0$ such that $H(\xbf) > H(\xeq)$ for all $\xbf \in \Bc(\xeq,\varepsilon)\setminus \{\xeq\}$.
Equivalently, there exists $\varepsilon^\prime >0$ and a class-$\mathcal{K}$ function $\alpha$ (see \cite[Chapter 3]{khalil2002nonlinear}) such that
  \begin{equation}
    \label{eq:strictly}
    H(\xbf) - H(\xeq) \geq \alpha(\Vert \xbf - \xeq \Vert_2), \; \forall \xbf \in \Bc(\xeq,\varepsilon^\prime)\text.
  \end{equation}
\end{definition}

\begin{lemma}
\label{lem:unique}
  Let $H:\mathbb{R}^n \to \mathbb{R}$ be continuously differentiable. If $\xeq$ is a strict minimum of $H$, then there exists $\varepsilon > 0$ such that $\xeq$ is the unique solution of $\gradH(\xbf)=\mathbf{0}$ in $\Bc(\xeq,\varepsilon)$.
\end{lemma}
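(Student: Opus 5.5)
The plan is to argue by contradiction and use the quantitative form \eqref{eq:strictly} of Definition~\ref{asm:fonc}. Suppose that for every $\varepsilon>0$ the ball $\Bc(\xeq,\varepsilon)$ contains a critical point of $H$ other than $\xeq$. Taking $\varepsilon = 1/k$ gives a sequence $\xbf_k \neq \xeq$ with $\gradH(\xbf_k)=\mathbf{0}$ and $\xbf_k \to \xeq$. For $k$ large we have $\xbf_k \in \Bc(\xeq,\varepsilon^\prime)$, so \eqref{eq:strictly} yields
\begin{equation*}
H(\xbf_k) - H(\xeq) \geq \alpha(\Vert \xbf_k - \xeq\Vert_2) > 0 .
\end{equation*}
Thus each $\xbf_k$ is a critical point sitting strictly above the minimum level, arbitrarily close to $\xeq$.

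Next I would try to contradict this using the segment from $\xeq$ to $\xbf_k$. Let $d_k = \Vert \xbf_k-\xeq\Vert_2$ and $\mathbf{v}_k = (\xbf_k-\xeq)/d_k$. Define $\phi_k(s) = H(\xeq + s\mathbf{v}_k)$ on $[0,d_k]$. This function is $C^1$ and satisfies $\phi_k(0)=H(\xeq)$, $\phi_k^\prime(0)=0$ and $\phi_k^\prime(d_k)=0$. By \eqref{eq:strictly}, $\phi_k(s) - \phi_k(0) \geq \alpha(s)$ on the whole segment. The natural line of attack is to compare the energy gain $\alpha(d_k)$ with $\int_0^{d_k}\phi_k^\prime(s)\,ds$ and argue that the gradient must be bounded below along the segment.

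This step is the main obstacle, and I expect it to \emph{fail} under the stated hypotheses. Continuity of $\gradH$ only gives $\phi_k^\prime \to 0$ uniformly as $k\to\infty$. That is fully compatible with $\phi_k(d_k)-\phi_k(0)\ge\alpha(d_k)$, because $\alpha(d_k)$ can be $o(d_k)$. Indeed, the statement as worded is false for merely continuously differentiable $H$. A counterexample in $n=1$ is $H(x)=x^4\bigl(2+\sin(1/x)\bigr)$ with $H(0)=0$. This $H$ is $C^1$ and has a strict minimum at $0$, since $H(x)\ge x^4$. Its derivative $H^\prime(x)=x^2\bigl(4x(2+\sin(1/x))-\cos(1/x)\bigr)$ changes sign on every interval $(0,\varepsilon)$, so $H$ has critical points accumulating at $0$. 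No choice of test segment or Lyapunov-type estimate can close the contradiction without extra hypotheses.

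So the plan proves the conclusion exactly as stated only after adding a hypothesis. Any one of the following suffices:
\begin{itemize}
\item a gradient lower bound $\Vert\gradH(\xbf)\Vert_2 \ge \beta(\Vert\xbf-\xeq\Vert_2)$ near $\xeq$, with $\beta$ of class $\mathcal{K}$, which contradicts $\gradH(\xbf_k)=\mathbf{0}$ immediately;
\item $H\in C^2$ with $\nabla^2_{\xbf}H(\xeq)\succ 0$.
\end{itemize}
In the second case, write $\gradH(\xbf_k) = \int_0^1 \nabla^2_{\xbf}H(\xeq + t(\xbf_k-\xeq))\,dt\,(\xbf_k-\xeq)$. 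The averaged Hessian is positive definite for large $k$, so it is nonsingular, which forces $\xbf_k=\xeq$ and gives the contradiction. I would state explicitly which of these hypotheses is being used and note that \eqref{eq:strictly} alone is insufficient.
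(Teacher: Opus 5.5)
Your refutation is correct, and it targets the exact step where the paper's own proof fails. The counterexample checks out. For $H(x)=x^4(2+\sin(1/x))$ with $H(0)=0$:
\begin{itemize}
\item $H(x)\ge x^4$, so \eqref{eq:strictly} holds with $\alpha(r)=r^4$;
\item the bound $|H'(x)|\le 12|x|^3+x^2$ gives $H\in C^1$ with $H'(0)=0$;
\item $H'\bigl(1/(2k\pi)\bigr)<0$ for $k\ge 2$ while $H'\bigl(1/((2k+1)\pi)\bigr)>0$, so by the intermediate value theorem critical points accumulate at $0$.
\end{itemize}
Replacing $x$ by $x_1$ and adding $x_2^2+\dots+x_n^2$ gives the same in any dimension.

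The paper's appendix proof uses your segment function $\phi(t)=H(\xeq+t(\xbf-\xeq))$, with $\phi'(0)=\phi'(1)=0$ and $\phi(t)-\phi(0)\ge\alpha(t\|\xbf-\xeq\|_2)$. It then applies Rolle's theorem ``informally'' to conclude that $\phi$ increases strictly and so cannot have a critical point at $t=1$. That inference fails twice. A lower envelope $\alpha$ implies no monotonicity of $\phi$. Even a strictly increasing $C^1$ function can have a vanishing derivative. Your example realizes the failure at every critical point $x_k\neq 0$.

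The auxiliary function $G=H-\alpha(\|\cdot-\xeq\|_2)$ in that proof does not help either. It only gives $G(\xbf)\ge G(\xeq)$, not a strict inequality, and it need not be differentiable when $\alpha$ is merely of class $\mathcal K$. So the lemma has no valid proof as stated. Adding a hypothesis, as you do, is the right move, and both of your repairs are sound.

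It is worth noting which repair fits the paper's use of the lemma. Your second hypothesis does not cover the paper's Hamiltonian \eqref{nnHamil}. Near $\xeq$ one has $\sigma(r)=r^2/(2\delta)+O(r^4)$ and $h(\sigma)=O(\sigma^{d+1})$, so $\hat H=O(\|\xbf-\xeq\|_2^{2d+2})$. Its Hessian at $\xeq$ (when it exists) therefore vanishes for every $d\ge 1$.

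Your first hypothesis is close to assuming the conclusion in general, but it can be verified directly for \eqref{nnHamil}. Set $\Delta\xbf=\xbf-\xeq$, substitute $h'(\sigma)=g(\sigma)h(\sigma)/\sigma$ into \eqref{diff_Hr}, and use $\|\Delta\xbf\|_2^2/\sigma=\sqrt{\|\Delta\xbf\|_2^2+\delta^2}+\delta$. This gives
\[
\Delta\xbf^\top\nabla_{\!\xbf}\hat H(\xbf)=h(\sigma)\Bigl[\nabla_{\!\xbf}\FNN(\xbf)^\top\Delta\xbf+\FNN(\xbf)\,g(\sigma)\Bigl(1+\frac{\delta}{\sqrt{\|\Delta\xbf\|_2^2+\delta^2}}\Bigr)\Bigr].
\]
The bracket tends to $2g(0)\FNN(\xeq)=2(d+1)\FNN(\xeq)>0$, and $h(\sigma)>0$ for $\sigma>0$. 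Hence $\nabla_{\!\xbf}\hat H\neq\mathbf 0$ on a punctured ball around $\xeq$. This is essentially the comparison behind $\beta(\Delta\xbf)\to+\infty$ before Proposition~\ref{prop:ROA}. It is what actually secures the isolation needed for the asymptotic-stability claim of Theorem~\ref{lem_stability} for the paper's model.

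For a general $C^1$ Hamiltonian, your example with $J=0$ and $R=1$, i.e.\ $\dot x=-H'(x)$, shows that the asymptotic-stability part of Lemma~\ref{lem:local_stable} fails as well, since equilibria accumulate at $0$.
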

\begin{proof}
See Appendix \ref{sec:proofLem1}
\end{proof}

With the help of Lemma~\ref{lem:unique}, the following lemma establishes a stability condition for an equilibrium of the autonomous PHS \eqref{eq:aPHS}.
Unlike existing studies \cite{rothStablePortHamiltonianNeural2025}, our result does not require the Hamiltonian to be convex.
\begin{lemma} \label{lem:local_stable}
  Suppose that
  $H(\xbf)$ satisfies Assumption~\ref{ass_hamil}, and
  $\xeq$ is a strict minimum of $H$.
  Then, $\xeq$ is a stable equilibrium of the system \eqref{eq:aPHS}. 
  Moreover, if $R(\xeq ) \succ 0$,
  $\xeq$ is an asymptotically stable equilibrium.
\end{lemma}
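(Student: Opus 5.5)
The plan is to use the shifted Hamiltonian $V(\xbf) := H(\xbf) - H(\xeq)$ as a local Lyapunov function and apply the standard Lyapunov theorem \cite[Theorem 4.1]{khalil2002nonlinear} on a small ball around $\xeq$. Throughout I assume that $J(\cdot)$ and $R(\cdot)$ are continuous. This makes the right-hand side of \eqref{eq:aPHS} continuous (since $H$ is $C^1$ by Assumption~\ref{ass_hamil}), so solutions exist locally and the argument makes sense.

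\emph{Step 1: $\xeq$ is an equilibrium.} Since $\xeq$ is a strict (hence local) minimum of the continuously differentiable function $H$, the first-order necessary condition gives $\gradH(\xeq)=\mathbf{0}$. Therefore $(J(\xeq)-R(\xeq))\gradH(\xeq)=\mathbf{0}$, and $\xeq$ is an equilibrium in the sense of Definition~\ref{def:equilibria}.

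\emph{Step 2: Lyapunov stability.} By Definition~\ref{asm:fonc} there is a radius $\varepsilon'>0$ with $V(\xbf)\ge \alpha(\Vert\xbf-\xeq\Vert_2)>0$ for $\xbf\in\Bc(\xeq,\varepsilon')\setminus\{\xeq\}$, while $V(\xeq)=0$. Hence $V$ is continuous and positive definite on this ball. Along trajectories of \eqref{eq:aPHS}, the computation \eqref{eq:energy} with $\ubf=\mathbf{0}$ gives
\begin{equation*}
\dot V(\xbf) = -\nabla_\xbf^\top H(\xbf)\, R(\xbf)\, \gradH(\xbf) \le 0,
\end{equation*}
because $R(\xbf)\succeq 0$. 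Thus $V$ is a Lyapunov function on $\Bc(\xeq,\varepsilon')$, and the standard theorem yields stability. If one prefers a direct argument: for a given $\epsilon\le\varepsilon'$, set $c=\min_{\Vert\xbf-\xeq\Vert_2=\epsilon}V>0$, and choose $\delta$ so that $V<c$ on $\Bc(\xeq,\delta)$. The sublevel set $\{V<c\}\cap\Bc(\xeq,\epsilon)$ is then forward invariant, since $V$ is nonincreasing along trajectories.

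\emph{Step 3: asymptotic stability when $R(\xeq)\succ0$.} By continuity of $R$ and of the minimum eigenvalue, there is $\varepsilon_1>0$ with $R(\xbf)\succ0$ on $\Bc(\xeq,\varepsilon_1)$. By Lemma~\ref{lem:unique}, there is $\varepsilon_2>0$ such that $\gradH(\xbf)\neq\mathbf{0}$ for all $\xbf\in\Bc(\xeq,\varepsilon_2)\setminus\{\xeq\}$. On the ball of radius $\bar\varepsilon=\min\{\varepsilon',\varepsilon_1,\varepsilon_2\}$ we then have $\dot V(\xbf)<0$ for $\xbf\neq\xeq$ and $\dot V(\xeq)=0$. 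Since $\dot V$ is continuous, it is negative definite on this ball, and the Lyapunov theorem gives asymptotic stability.

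The main subtle point is Step 3. Positive definiteness of $R$ alone only shows that $\dot V$ vanishes where $\gradH$ vanishes. Because $H$ is not assumed convex, other critical points could a priori accumulate at $\xeq$. Lemma~\ref{lem:unique} is exactly what excludes this, and it is essential that all three radii be intersected. If negative definiteness were delicate to verify, an alternative route would be LaSalle's invariance principle on the compact invariant set from Step 2: the largest invariant subset of $\{\dot V=0\}$ in that set is $\{\xeq\}$ by the same uniqueness argument, which again gives convergence to $\xeq$.
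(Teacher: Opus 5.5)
Your Steps 1 and 2 are correct and follow the paper's line: the shifted Hamiltonian serves as a local Lyapunov function, and a sublevel set is shown to be forward invariant. Your choice $c=\min_{\Vert\xbf-\xeq\Vert_2=\epsilon}V$ pins down the constant that the paper leaves as an unspecified $\delta_\varepsilon$, and you rightly do not need Lemma~\ref{lem:unique} for stability.

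Step 3 is the paper's argument in strict-Lyapunov rather than LaSalle form. It has a genuine gap, shared by the paper's proof, exactly where you located the danger. You take the absence of critical points of $H$ in a punctured ball around $\xeq$ from Lemma~\ref{lem:unique}, but that lemma is false for merely continuously differentiable $H$. The paper's proof of it applies Rolle's theorem ``informally'', and the growth bound $\phi(t)\ge\phi(0)+\alpha(t\Vert\xbf-\xeq\Vert_2)$ does not stop $\phi'$ from vanishing at $t=1$. Here is a counterexample:
\begin{itemize}
\item Take $n=1$ and $H(x)=x^4\bigl(2+\sin(1/x)\bigr)$ for $x\neq0$, $H(0)=0$.
\item Then $H(x)\ge x^4$, so $0$ is a strict minimum satisfying \eqref{eq:strictly} with $\alpha(r)=r^4$.
\item The derivative $H'(x)=x^2\bigl(4x(2+\sin(1/x))-\cos(1/x)\bigr)$ tends to $0=H'(0)$, so $H$ is $C^1$.
\item The bracket equals $8x-1<0$ at $1/x=2k\pi$ and $8x+1>0$ at $1/x=(2k+1)\pi$.
\item Hence for every $k\ge2$ there is a critical point $x_k\in\bigl(1/((2k+1)\pi),\,1/(2k\pi)\bigr)$, and $x_k\to0$.
\end{itemize}

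This is not just a defect of the argument; the claim itself fails. With $J=0$ and $R\equiv1\succ0$, system \eqref{eq:aPHS} becomes $\dot x=-H'(x)$. Every $x_k$ is an equilibrium arbitrarily close to $0$, so $0$ is Lyapunov stable but not asymptotically stable. Your LaSalle fallback fails for the same reason, since each $x_k$ is an invariant point of $\{\dot V=0\}$.

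So the ``moreover'' part of the statement cannot be proved from the stated hypotheses. It needs an extra assumption making $\xeq$ an isolated critical point of $H$. Examples are $H\in C^2$ with $\nabla^2H(\xeq)\succ0$, or $H$ real-analytic near $\xeq$, where the {\L}ojasiewicz gradient inequality rules out nearby critical points at a strict minimum. Under such an assumption your Step 3 goes through verbatim, with $\varepsilon_2$ supplied by that hypothesis instead of by Lemma~\ref{lem:unique}.
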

\begin{proof}
See Appendix \ref{sec:proofLem2}
\end{proof}

Using Lemma~\ref{lem:local_stable}, we can enforce the known stable equilibria via strict minima of the Hamiltonian function.
According to Definition \ref{eq:aPHS}, any strict minimum of $H$ is an equilibrium of the PHS \eqref{eq:PHS}.
Subsequently, we develop a data-driven method for learning the Hamiltonian that preserves the stability of a single stable equilibrium \( \xeq \), then extend it to the case of multiple stable equilibria.

\subsection{The proposed neural Hamiltonian \label{Hr}}

A critical component of our method is to guarantee that $H$ satisfies the stability requirement for a given stable equilibrium $\xeq$.
We consider a neural network (NN) model for learning the Hamiltonian $H$.
In conventional approaches,
 $\gradH(\xeq) = {\bf 0}$ is ensured (loosely) by adding a regularization term $\lambda \big\|\gradH(\xeq)\big\|_2^2$ to the loss function with a suitable weight $\lambda > 0$. 
However, this approach trades off the training fit for the zero-gradient constraint and is not able to guarantee exactly that $\gradH(\xeq) = {\bf 0}$.
Moreover, this approach typically does not provide theoretical guarantee for the stability of $\xeq$ in the learned model as does Lemma~\ref{lem:local_stable}.
We propose a specially designed NN model architecture for $H$ that ensures the stability at $\xeq$ by design without compromising the training fit.
In Section~\ref{sec:stability}, we will provide theoretical guarantee for local stability of strong minima of $H$ in light of Lemma~\ref{lem:local_stable}.

Our proposed model for $H$ has the form $\hat H(\xbf)  = \FNN(\xbf) h(\Vert \xbf - \xeq \Vert_2)$, where $\FNN(\cdot)$ is a neural network taking an input in $\Rbb^n$ 
and returning an output in $\Rbb_{>0}$ (\eg by using a positive activation function in the output layer), and $h(\cdot)$ is a predefined function from 
$\Rbb_{\geq 0}$ to $\Rbb_{\geq 0}$.
The function $h(\cdot)$ must meet several criteria.
First, $h(\Vert \xbf - \xeq \Vert_2) = 1$ whenever $\Vert \xbf - \xeq \Vert_2 \geq b$ for a small $b$, 
so that it does not affect the prediction of $\FNN(\cdot)$.
Second, it must ensure that $\nabla_{\!\xbf} \hat{H}(\xeq) = 0$.
Since 
\begin{multline*}
\nabla_{\!\xbf} \hat{H}(\xbf) = \nabla_\xbf \FNN(\xbf) h(\Vert \xbf - \xeq \Vert_2)
\\
+ \FNN(\xbf)  \frac{\partial h(\Vert \xbf - \xeq \Vert_2)}{\partial \xbf},
\end{multline*}
to make $\nabla_{\!\xbf} \hat{H}(\xeq) = {\bf 0}$, we design $h(\cdot)$ such that $h (0) =  0$ and $h^\prime (0) =  0$ {because
$\frac{\partial h}{\partial \xbf} = h^\prime(\Vert \xbf - \xeq \Vert_2) 
\frac{\xbf - \xeq}{\Vert \xbf - \xeq \Vert_2}$.
}
However, $\frac{\xbf - \xeq}{\Vert \xbf - \xeq \Vert_2}$ is not continuous at $\xbf = \xeq$.
Therefore, we make a slight change as
\begin{align}
  \hat{H}(\xbf)  = \FNN(\xbf) h(\sigma(\Vert \xbf - \xeq \Vert_2)),
  \label{nnHamil}
\end{align}
where $\sigma: \Rbb_{\geq 0} \ra \Rbb_{\geq 0}$ is such that $\sigma(0) = 0, \frac{\pa\sigma}{\pa z}(0) = 0$ and $\sigma$ is continuously differentiable in $\Rbb_{\geq 0}$.
Several candidates for $\sigma(z)$ are available, for example, we choose $\sigma(\Vert \xbf - \xeq\Vert_2) = \sqrt{\Vert \xbf - \xeq\Vert_2^2 + \delta^2} - \delta$ (for $\delta > 0$), whose derivative $\frac{\pa \sigma}{\pa \xbf } = \frac{\xbf - \xeq}{\sqrt{\Vert \xbf - \xeq\Vert_2^2 + \delta^2}}$ goes to ${\bf 0}$ as $\xbf \ra \xeq$.
The function $h(\cdot)$ is chosen as the $d$-differentiable step function \cite{nguyen2020distributed} given by
\begin{align}
    h(\sigma) = \begin{cases}
        0, \qquad\qquad\text{if}~\sigma\in (-\infty, 0],
        \\
        \Big(\frac{\sigma}{\sigma_b}\Big)^{d+1} \sum_{j=0}^{d} 
            {d+j \choose j} {2d+1 \choose d-j} \Big( \frac{-\sigma}{\sigma_b} \Big)^j
            \\
            \qquad\qquad\qquad\qquad\text{if}~\sigma\in (0,\, \sigma_b),
        \\
        1, \qquad\qquad\text{if}~\sigma\in [\sigma_b, +\infty)\text,
    \end{cases}
    \label{eq:sms}
\end{align}
where $\sigma_b = \sigma(b)$ and $b>0$. 
We then have
\begin{equation}
  \label{diff_Hr}
  \nabla_{\!\xbf} \hat{H}(\xbf) \!=\!  \nabla_\xbf \FNN(\xbf) h(\sigma) 
  \!+\! \FNN(\xbf)  \frac{h^\prime(\sigma) (\xbf \!-\! \xeq)}{\sqrt{\Vert \xbf \!-\! \xeq \Vert_2^2 \!+\! \delta^2}}.
\end{equation}
From \eqref{eq:sms}, we have $h'(\sigma) = 0$ for $\sigma \geq \sigma_b$, and for $0< \sigma < \sigma_b$, $h^\prime(\sigma) = g(\sigma) \frac{h(\sigma)}{\sigma}$, where
\begin{equation}
  \label{eq:g-in-h-prime}
  g(\sigma) = \frac{\sum_{j=0}^{d} (-1)^j (d+j+1) {d+j \choose j} {2d+1 \choose d-j} \left( \frac{\sigma}{\sigma_b} \right)^j}{  \sum_{j=0}^{d} (-1)^j {d+j \choose j} {2d+1 \choose d-j} \left( \frac{\sigma}{\sigma_b} \right)^j}\text.
\end{equation}
An illustration of $h(\sigma(\Vert \xbf \Vert_2))$ when $d=1$ and $\xeq = {\bf 0}$ is shown in Fig.~\ref{hx}.
Here, Fig.~\ref{hx-1} and Fig.~\ref{hx-2} show that $h$ is continuously differentiable.  
In addition, for $\xbf$ such that $\Vert \xbf - \xeq\Vert_2 \geq b$, $h(\sigma) = 1$ and $h^\prime(\sigma) = 0$, thus
$\hat{H}(\xbf) = \FNN (\xbf)$ and $\nabla_{\!\xbf} \hat{H}(\xbf)\ = \nabla_\xbf \FNN (\xbf)$.
The effect of $h(\cdot)$ is activated only in the chosen region $\Omega_b = \calB(\xeq, b)$.

\begin{remark}
The choice of $\sigma(\Vert \xbf - \xeq \Vert_2)$ yields the term
$ \frac{\xbf - \xeq}{\sqrt{\Vert \xbf - \xeq \Vert_2^2 + \delta^2}}$
in $\frac{\partial h}{\partial \xbf}$, whose magnitude is always less than $1$ and approaches $1$ when $\delta$ is sufficiently small. Hence, the contribution of $\frac{\partial h}{\partial \xbf}$ to $\nabla_{\xbf}\hat H$ is limited.
If we instead chose $h(\Vert \xbf - \xeq \Vert_2^2)$ to avoid the non-differentiability of $\frac{\xbf - \xeq}{\Vert \xbf -\xeq \Vert}$ at $\xeq$, then $\xbf - \xeq$ would appear in
$\frac{\partial h}{\partial \xbf}
= 2 h^\prime(\Vert \xbf - \xeq \Vert_2^2)(\xbf - \xeq)$.
In this case, it would be difficult to design $h(\cdot)$ 
to mitigate the effect of the magnitude of $\xbf - \xeq$ in $\nabla_{\xbf}\hat H$.
\end{remark}

\begin{remark}
\label{rm:relax}
Based on the design of $\hat H(\xbf)$ in \eqref{nnHamil}, we may consider the following more general form: $
\hat H(\xbf)
= \FNN(\xbf)\, h\!\left(\sigma(\|\xbf-\xeq\|_2)\right)
+ w(\xbf)\bigl(1-h\!\left(\sigma(\|\xbf-\xeq\|_2)\right)\bigr)$,
where $w(\xbf)$ denotes a learnable differentiable function such that $w(\xbf)\ge 0$ for all $\xbf$, and $\nabla_{\xbf} w(\xeq)=\mathbf{0}$.
The function $w(\xbf)$ provides a relaxation of \eqref{nnHamil} within $\Omega_b$ and is independent of $\FNN(\xbf)$. Specifically, define $
g(\xbf) = w(\xbf)\bigl(1 - h\!\left(\sigma(\|\xbf - \xeq\|_2)\right)\bigr)$,
whose effect on $\hat H(\xbf)$ is restricted to the region $\Omega_b$. Indeed, $g(\xbf) = 0$ and $\nabla_{\xbf} g(\xbf) = \mathbf{0}$ for all $\xbf \notin \Omega_b$, implying that $g(\xbf)$ has no influence outside $\Omega_b$.
\end{remark}

\begin{figure}[!tb]
  \subfloat[$h(\sigma)$ w.r.t. $\xbf$ ($n=1$). \label{hx-1}]{\includegraphics[width = 0.5\linewidth]{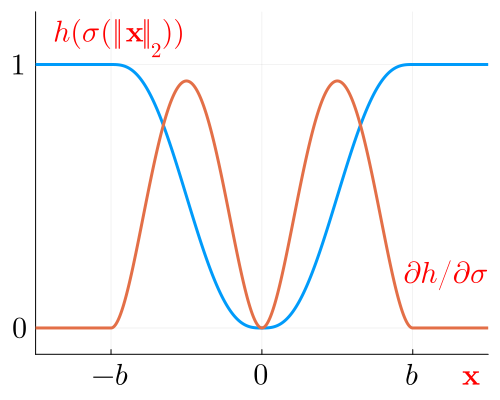}}
  \subfloat[$h(\sigma)$ w.r.t. $\xbf $ ($n=2$). \label{hx-2}]{\includegraphics[width = 0.5\linewidth]{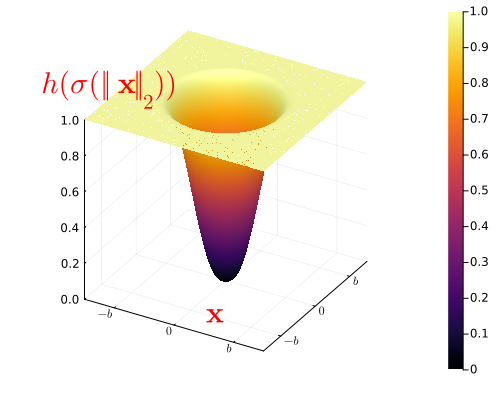}}
  \centering
  \caption{Illustration of $h(\sigma(\Vert \xbf \Vert_2))$ at $\xeq = {\bf 0}$. \label{hx}}
\end{figure}

\subsection{Learning Port-Hamiltonian model}

Besides the Hamiltonian $H(\xbf)$, the matrices $J(\xbf)$, $R(\xbf)$, and $G(\xbf)$ may be unknown.
If the parametric structures of the matrices $J(\xbf)$, $R(\xbf)$, and $G(\xbf)$ are 
known, but the parameters themselves are unknown, we can learn these parameters along with the parameters of the neural Hamiltonian $\hat H$.
If $J(\xbf)$, $R(\xbf)$, and $G(\xbf)$ are completely unknown, 
the decompositions
$J(\xbf) =  T_J(\xbf) - T_J^\top(\xbf)$ and $R(\xbf) = T_R(\xbf) T_R^\top(\xbf)$, where 
$T_J$ and $T_R$ are lower triangular matrices, are used to preserve skew-symmetric and symmetric structures of $J$ and $R$.
As a result, we can use neural networks $\mathbf{NN}_{J}(\xbf), \mathbf{NN}_{R}(\xbf)$, and $\mathbf{NN}_{G}(\xbf)$, whose outputs correspond to unknown elements of $T_J(\xbf)$, $T_R(\xbf)$, and $G(\xbf)$, respectively.
Accordingly, the learning model of the PHS is given by
\begin{equation}
  \begin{cases}
    \dot \xbf = (\hat{J}(\xbf)-\hat{R}(\xbf))     \displaystyle
    \nabla_{\!\xbf} \hat{H}(\xbf) + \hat{G}(\xbf) \ubf, &
    \\
    \ybf = \hat{G}^\top(\xbf)    
    \nabla_{\!\xbf} \hat{H}(\xbf), &
  \end{cases}
  \label{eq:lPHS}
\end{equation}
where $\hat{J}$, $\hat{R}$, and $\hat{G}$ are learned matrices.

For training the PH model \eqref{eq:lPHS},
let us assume that a training dataset $\Dc = \{\xbf_k, \ubf_k, t_k\}_{k=0,1,\dots,N_d}$ of the PHS \eqref{eq:PHS}, where $N_d$ is the number of samples, is given.
We aim to learn the dynamics \eqref{eq:PHS}, in particular $J$, $R$, $G$, and the Hamiltonian $H$, that preserve the PH structure and the stable equilibrium $\xeq$.
Hereafter, $\xbf(t)$ (or $\xbf$) denotes the state variable obtained from the learned PH model \eqref{eq:lPHS}, whereas $\mathbf{x}_k$ represents the state of the PHS \eqref{eq:PHS} at time $t_k$.

Let $\bm{\theta}$ be the vector of all parameters of $\hat H(\xbf)$, $\hat J(\xbf)$, $\hat R(\xbf)$, and $\hat G(\xbf)$ of the 
learning model \eqref{eq:lPHS}.
Learning $\bs\theta$ from the dataset $\calD$ can be carried out by minimizing the loss function
\begin{equation}
  \label{eq:loss-general}
  \Lc = \frac{1}{N_d}\sum_{k=0}^{N_d-1} \Vert \hat\xbf_{k+1} - \xbf_{k+1} \Vert_2^2 +  \lambda \Vert \bm\theta \Vert_2^2,
\end{equation}
where $\lambda > 0$ is the regularization weight and
$\hat\xbf_{k+1} = \xbf(t_{k+1})$ is  
obtained by integrating \eqref{eq:lPHS} from $\xbf_k$ at time $t_k$ to time $t_{k+1}$.
Here, we employ symplectic integrators, such as the symplectic Euler method and the Verlet method \cite{kotyczka2019discrete}, to solve the Hamiltonian differential equation. 

\subsection{Preserving multiple stable equilibria}
\label{sec:mul_equi}
 
We extend the proposed neural Hamiltonian approach to the case of multiple isolated stable equilibria.
Consider the PHS \eqref{eq:PHS}, which admits  $n_\mathrm{eq}$ isolated stable equilibria $\big\{\xeq^{(i)}\big\}_{i = 1,\dots, n_\mathrm{eq}}$ characterized by  $\gradH(\xeq^{(i)}) = {\bf 0}$. The proposed Hamiltonian for multiple isolated equilibira is
\begin{align}
\hat H(\xbf) = \FNN(\xbf) h_{\Sigma}(\xbf; \xbf^{(1)},\dots, \xeq^{(n_\mathrm{eq})}),
\label{eq:Hhat_mul}
\end{align}
where 
$h_{\Sigma} = \sum_{i=1}^{n_\mathrm{eq}} \! h_i(\sigma_i) - n_\mathrm{eq} + 1$, $\sigma_i = \sigma(\Vert \xbf - \xeq^{(i)}\Vert_2)$, and
$h_i(\sigma_i)$ is the step function \eqref{eq:sms} with parameter $b = b_i$ standing for the active region of $h_i$ as in Fig. \ref{hx}.
We select the parameters $b_i$ such that there must be no overlap between the active regions $\Omega_{b_i} = 
\calB(\xeq^{(i)}, b_i)$, that is,  $\Omega_{b_i} \cap \Omega_{b_j} = \emptyset$ for all $i,j = 1,\dots, n_\mathrm{eq}$ and $i \neq j$.
Note that, for all isolated equilibria, the parameters $b_i$ can be chosen sufficiently small to satisfy the above requirement.

Based on this setting, we can analyze individually the stability of each $\xeq^{(i)}$ in the learned PH model since $h_i(\sigma(\Vert \xbf - \xeq^{(i)}\Vert_2)) = 1$ over $\xbf \in \Omega_{b_j} (i\neq j)$.
From \eqref{eq:Hhat_mul}, we have
\begin{multline}
\nabla_\xbf {\hat H}(\xbf) =  \nabla_\xbf \FNN(\xbf) \left(\sum_{i=1}^{n_\mathrm{eq}} h_i(\sigma_i) \!-\! n_\mathrm{eq} \!+\! 1 \right)
\\
+ \FNN(\xbf) \sum_{i=1}^{n_\mathrm{eq}}  \frac{h_i^\prime(\sigma_i)(\xbf \!-\! \xeq^{(i)})}{\sqrt{\Vert \xbf \!-\! \xeq^{(i)} \Vert_2^2 \!+\! \delta^2}}.
\label{eq:dHhat_mul}
\end{multline}
It follows from the definitions of $h_i$ and $\Omega_{b_i}$ that, for all $i = 1,\dots, n_\mathrm{eq}$,
$h_{\Sigma}(\xeq^{(i)}; \xbf^{(1)},\dots, \xeq^{(n_\mathrm{eq})}) = 0$ and
$h_i^\prime(\sigma_i) = 0$ at $\xbf = \xeq^{(i)}$.
Moreover, for all $\xbf \in \Om_{b_i}$,
$h_j(\sigma_j) = 1$ and $h_j^\prime(\sigma_j) = 0$ for all $j\neq i$.
As a result, for all $\xbf \in \Om_{b_i}$, \eqref{eq:dHhat_mul} becomes
\begin{multline*}
    \nabla_\xbf {\hat H}(\xbf) = \nabla_\xbf \FNN(\xbf)  h_i(\sigma_i) 
    \\
+ \FNN(\xbf)   \frac{h_i^\prime(\sigma_i)(\xbf - \xeq^{(i)})}{\sqrt{\Vert \xbf - \xeq^{(i)} \Vert_2^2 + \delta^2}}\text, 
\end{multline*}
which has the same form as \eqref{diff_Hr}. 
Thus, the stability of the learned PH model with multiple equilibria can be analyzed by examining the stability of each individual equilibrium.

\section{Stability analysis}
\label{sec:stability}
This section presents the stability-preservation analysis of the learned PH model.

\begin{theorem} \label{lem_stability}
  With the proposed neural Hamiltonian function \eqref{nnHamil}, the learned PH model 
  \eqref{eq:lPHS}  has a stable equilibrium at $\xeq$.
  Moreover, if $\hat R(\xeq) \succ 0$, $\xeq$ is an asymptotically stable equilibrium of \eqref{eq:lPHS}.
\end{theorem}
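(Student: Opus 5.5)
The natural route is to reduce Theorem~\ref{lem_stability} to Lemma~\ref{lem:local_stable}, applied to the learned system \eqref{eq:lPHS} with Hamiltonian $\hat H$ and matrices $\hat J,\hat R$. Lemma~\ref{lem:local_stable} needs only two things: $\hat H$ is continuously differentiable (the regularity part of Assumption~\ref{ass_hamil}), and $\xeq$ is a strict minimum of $\hat H$ in the sense of Definition~\ref{asm:fonc}. The learned matrices inherit the port-Hamiltonian structure by construction: $\hat J=T_J-T_J^\top$ is skew-symmetric and $\hat R=T_RT_R^\top\succeq 0$. Hence the Lyapunov argument behind Lemma~\ref{lem:local_stable}, which uses $\hat H$ as the Lyapunov function with $\dot{\hat H}=-\nabla_\xbf^\top\hat H\,\hat R\,\nabla_\xbf\hat H\le 0$ as in \eqref{eq:energy}, carries over verbatim. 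The asymptotic part then follows directly from the second half of that lemma once $\hat R(\xeq)\succ0$.

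\emph{Step 1: regularity.} I would show $\hat H\in C^1$. The network $\FNN$ is $C^1$ for smooth activations. The map $\xbf\mapsto\sigma(\Vert\xbf-\xeq\Vert_2)=\sqrt{\Vert\xbf-\xeq\Vert_2^2+\delta^2}-\delta$ is smooth everywhere, including at $\xeq$. The step function $h$ in \eqref{eq:sms} is $C^d$ with $d\ge1$, since it is the standard smoothstep polynomial whose first $d$ derivatives vanish at $0$ and at $\sigma_b$. So $\hat H$ is $C^1$, with gradient given by \eqref{diff_Hr}.

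\emph{Step 2: equilibrium.} At $\xbf=\xeq$ we have $\sigma=0$, so $h(0)=0$ and $h'(0)=0$, and the factor $(\xbf-\xeq)$ vanishes. By \eqref{diff_Hr}, $\nabla_\xbf\hat H(\xeq)=\mathbf 0$, so $\xeq$ is an equilibrium in the sense of Definition~\ref{def:equilibria}.

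\emph{Step 3: strict minimum (the main point).} First, $\hat H(\xeq)=\FNN(\xeq)\,h(0)=0$. Next, for $\xbf\neq\xeq$ we have $\sigma(\Vert\xbf-\xeq\Vert_2)>0$, because $\sigma$ is strictly increasing on $\Rbb_{\ge0}$ with $\sigma(0)=0$. I would then show $h(\sigma)>0$ for all $\sigma>0$. For $\sigma\ge\sigma_b$ this is immediate since $h=1$. On $(0,\sigma_b)$, $h$ is the smoothstep polynomial, which is monotonically increasing from $0$ to $1$. To confirm this, I would use the known form
\[
h'(\sigma)\propto \sigma^d(\sigma_b-\sigma)^d,
\]
which is positive on $(0,\sigma_b)$. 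Checking this derivative form against the explicit sum in \eqref{eq:sms} is the step I expect to be the main technical obstacle. If it proves awkward, I would fall back on the weaker statement that the sum in \eqref{eq:sms} equals $1$ at $\sigma=0$, so it is positive near $0$. Then $h>0$ on some interval $(0,\varepsilon)$, and that is all local strictness requires. Combining these facts with $\FNN>0$ (positive output activation) gives $\hat H(\xbf)>0=\hat H(\xeq)$ for all $\xbf\neq\xeq$, at least in a ball around $\xeq$. So $\xeq$ is a strict minimum; in fact it is a global one under the monotonicity claim.

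\emph{Step 4: conclusion.} Lemma~\ref{lem:local_stable} now yields that $\xeq$ is a stable equilibrium of \eqref{eq:lPHS}. If in addition $\hat R(\xeq)\succ0$, it yields asymptotic stability. By Lemma~\ref{lem:unique}, $\xeq$ is also the unique zero of $\nabla_\xbf\hat H$ nearby; this is what makes the LaSalle-type step inside Lemma~\ref{lem:local_stable} work.
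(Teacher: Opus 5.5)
Your proposal is correct and is essentially the paper's proof: show $\hat H(\xeq)=0$ and $\nabla_\xbf\hat H(\xeq)=\mathbf 0$, deduce from $\FNN>0$ and $h(\sigma)>0$ for $\sigma>0$ that $\xeq$ is a strict minimum, and invoke Lemma~\ref{lem:local_stable}, with your checks of $C^1$ regularity and of $h'(\sigma)\propto\sigma^d(\sigma_b-\sigma)^d$ filling in steps the paper's one-line argument leaves implicit (one harmless slip in your fallback: the bracketed sum in \eqref{eq:sms} equals $\binom{2d+1}{d}$ at $\sigma=0$, not $1$). One caveat applies equally to the paper: Lemma~\ref{lem:unique} is false for general $C^1$ functions (e.g.\ $x^4(2+\sin(1/x))$ has critical points accumulating at its strict minimum $0$), so for the asymptotic part you should verify directly that $\xeq$ is an isolated zero of $\nabla_\xbf\hat H$, which does hold because near $\xeq$ the term $\FNN\,h'(\sigma)\Delta\xbf/\sqrt{\Vert\Delta\xbf\Vert_2^2+\delta^2}$ has norm of exact order $\Vert\Delta\xbf\Vert_2^{2d+1}$ while $\nabla_\xbf\FNN\,h(\sigma)=O(\Vert\Delta\xbf\Vert_2^{2d+2})$ (the same comparison that underlies the set $\Rc$ used for Proposition~\ref{prop:ROA}).
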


\begin{proof}
  The PH model \eqref{eq:lPHS} has an 
  equilibrium at $\xeq$ because, by the definition of $\hat{H}(\xbf)$ in \eqref{nnHamil} and its derivative in \eqref{diff_Hr}, $\hat{H}({\xeq}) = 0$ and $\nabla \hat{H}({\xeq}) = \mathbf{0}$.
  Since $\FNN (\xbf) > 0$, there exist $\varepsilon$ such that 
  $\hat{H}(\xbf) > 0 = \hat{H}({\xeq})$
  for all $\xbf \in B(\xeq, \varepsilon)$.
  Thus, $\xeq$ is a strict minimum of $\hat{H}(\xbf)$.
  By Lemma~\ref{lem:local_stable}, $\xeq$ is a stable equilibrium of \eqref{eq:lPHS} and, if $\hat R(\xeq) \succ 0$, $\xeq$ is asymptotically stable.
\end{proof}

Based on Theorem \ref{lem_stability}, the following proposition establishes stability preservation for multiple stable equilibria of the learned PH model.

\begin{proposition}
The proposed neural Hamiltonian function \eqref{eq:Hhat_mul} ensures that the learned PH model \eqref{eq:lPHS} has multiple stable equilibria at 
$\big\{\xeq^{(i)}\big\}_{i = 1,\dots, n_\mathrm{eq}}$.
In addition, if $\hat R(\xeq^{(i)}) \succ 0$, the equilibrium $\xeq^{(i)}$ is asymptotically stable.
\end{proposition}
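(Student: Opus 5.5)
The plan is to reduce the proposition to Theorem~\ref{lem_stability} (equivalently, to Lemma~\ref{lem:local_stable}) applied at each $\xeq^{(i)}$ separately. The reduction relies on the disjointness of the active regions $\Omega_{b_i}$. Fix an index $i$. The first step is to show that on $\Omega_{b_i}$ the multi-equilibrium Hamiltonian \eqref{eq:Hhat_mul} coincides with a single-equilibrium Hamiltonian of the form \eqref{nnHamil} centered at $\xeq^{(i)}$.

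For $\xbf \in \Omega_{b_i}$ and any $j\neq i$, disjointness gives $\xbf\notin \Omega_{b_j}$. Hence $\Vert \xbf-\xeq^{(j)}\Vert_2 \geq b_j$, so $\sigma_j\ge \sigma(b_j)$ by monotonicity of $\sigma$. It follows that $h_j(\sigma_j)=1$ and $h_j'(\sigma_j)=0$. Therefore
\[
h_\Sigma(\xbf) = h_i(\sigma_i) + (n_\mathrm{eq}-1) - n_\mathrm{eq} + 1 = h_i(\sigma_i)
\]
on $\Omega_{b_i}$, and $\hat H(\xbf)=\FNN(\xbf)h_i(\sigma_i)$ there. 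One must check that this identity holds on an open neighborhood of $\xeq^{(i)}$, so that gradients agree as well; the interior of $\Omega_{b_i}$ serves. The gradient then reduces to the single-equilibrium expression, as already computed before the section.

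The second step verifies the hypotheses of Lemma~\ref{lem:local_stable} at $\xeq^{(i)}$. Since $h_i(0)=0$ and $h_i'(0)=0$, and $\nabla\sigma_i\to\0$ at $\xeq^{(i)}$, we get $\hat H(\xeq^{(i)})=0$ and $\nabla_\xbf\hat H(\xeq^{(i)})=\0$. This makes $\xeq^{(i)}$ an equilibrium of \eqref{eq:lPHS} with $\ubf=\0$. For strictness, take $\xbf\in\Omega_{b_i}\setminus\{\xeq^{(i)}\}$. Then $\sigma_i>0$, so $h_i(\sigma_i)>0$, and $\FNN>0$ gives $\hat H(\xbf)>0=\hat H(\xeq^{(i)})$. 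Hence $\xeq^{(i)}$ is a strict minimum.

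The main point needing care is positivity of $h$ on $(0,\sigma_b)$. This holds because the smoothstep polynomial in \eqref{eq:sms} is the regularized incomplete beta function, which is strictly increasing from $0$ to $1$ on this interval. I would justify this by noting that its derivative is proportional to $\sigma^d(\sigma_b-\sigma)^d>0$. One also needs that $\sigma$ is strictly positive for $z>0$; this is clear for $\sqrt{z^2+\delta^2}-\delta$.

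Finally, $\hat H$ is continuously differentiable, since $\FNN$, $h_i$ and $\sigma$ are $C^1$. Lemma~\ref{lem:local_stable} therefore gives stability of $\xeq^{(i)}$ for \eqref{eq:lPHS}, and asymptotic stability when $\hat R(\xeq^{(i)})\succ0$. Since $i$ was arbitrary, the claim holds for all $n_\mathrm{eq}$ equilibria.
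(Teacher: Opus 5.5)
Your proposal is correct and follows the paper's route: disjointness of the $\Omega_{b_i}$ reduces $\hat H$ to $\FNN\,h_i(\sigma_i)$ near $\xeq^{(i)}$, you check that $\nabla_{\!\xbf}\hat H(\xeq^{(i)})=\0$ and that $\xeq^{(i)}$ is a strict minimum, and you invoke Lemma~\ref{lem:local_stable} as in Theorem~\ref{lem_stability}; you also make explicit the positivity of the smoothstep on $(0,\sigma_b)$ (derivative proportional to $\sigma^d(\sigma_b-\sigma)^d$), which the paper leaves implicit. One caveat, shared with the paper: the asymptotic half of Lemma~\ref{lem:local_stable} rests on Lemma~\ref{lem:unique}, which is false for general $C^1$ functions (e.g.\ $x^4(2+\sin(1/x))$ has critical points accumulating at its strict minimum $0$), so for full rigor you should add that $\xeq^{(i)}$ is an isolated critical point of this particular $\hat H$; this holds because near $\xeq^{(i)}$ the term $\FNN\,h_i'(\sigma_i)(\xbf-\xeq^{(i)})/\sqrt{\Vert\xbf-\xeq^{(i)}\Vert_2^2+\delta^2}$ has norm bounded below by a positive multiple of $\Vert\xbf-\xeq^{(i)}\Vert_2^{2d+1}$, while $h_i(\sigma_i)\nabla_{\!\xbf}\FNN$ is only $O(\Vert\xbf-\xeq^{(i)}\Vert_2^{2d+2})$.
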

\begin{proof}
The proof follows a similar argument as in the proof of Theorem~\ref{lem_stability}, showing that $\xeq^{(i)}$ is a strict minimum of $\hat H(\xbf)$ for all $i = 1, \dots, n_\mathrm{eq}$.
\end{proof}

Although Theorem \ref{lem_stability} confirms that $\xeq$ is a stable equilibrium of \eqref{eq:lPHS}, an estimation of the region of attraction still remains.
Consider a fully connected neural network of $L$ layers $\FNN(\xbf) = f_L(W_d(f_{L-1}(W_{L-1}(\dots W_1 \xbf + b_1 \dots) + b_{L-1}) + b_L))$, 
where $W_i$ is a weight matrix at layer $i$ with bias vector $b_i$, and 
$f_i$ is a continuous differentiable activation function for $i = 1,\dots, L$ such that $f_L > 0$.
Define the set
\begin{align} \label{bound}
  \Rc: \left\{\xbf  \in \Om_b \big|
  \Vert \nabla \FNN \Vert_\infty < \beta(\Delta \xbf) \right\},
\end{align}
where $ \beta(\Delta \xbf) =  \frac{c_L \ldotp h^\prime(\sigma) \Vert \Delta \xbf \Vert_\infty }{h(\sigma)\sqrt{\Vert \Delta  \xbf \Vert_2^2 + \delta^2}}$, $c_L = \min_{\xbf} \FNN (\xbf)$, and $\Delta\xbf = \xbf - \xeq$.  
Note that $\beta(\Delta\xbf) > 0$ and is well-defined in $\Omega_b = \calB(\xeq, b)$ except at $\Delta\xbf = \mathbf{0}$.
Recall that $\frac{h^\prime(\sigma)}{h(\sigma)} = \frac{g(\sigma)}{\sigma}$ with $g(\sigma)$ defined in \eqref{eq:g-in-h-prime}.
Also note that $\lim_{\sigma \to  0} g(\sigma) = g(0)$ exists and is positive.
We consider the limit, 
\begin{multline*}
\lim_{\Vert \Delta\xbf \Vert_2 \to 0} \beta(\Delta \xbf) = 
\frac{c_L g(0)}{\delta} \lim_{\Vert \Delta \xbf \Vert_2 \to 0} 
\frac{\Vert \Delta\xbf \Vert_\infty}{\Vert \Delta \xbf \Vert_2} \frac{\Vert \Delta \xbf \Vert_2}{\sigma(\Vert \Delta\xbf\Vert_2)}.
\end{multline*}
Due to equivalence of norms, we can split the above limit and obtain
$\lim_{\Vert \Delta\xbf \Vert_2 \to 0} \beta(\Delta\xbf) = c_\beta \lim_{\Vert \Delta\xbf \Vert_2 \to 0} 
\frac{\Vert \Delta\xbf \Vert_2}{\sigma(\Vert \Delta\xbf \Vert_2)} = +\infty$ (by applying L'hopital rule for the last term), where $c_\beta$ is a positive scalar.
Hence, the set $\Rc$ is nonempty and admits $\xeq$ as an interior point since 
$\nabla\mathbf{NN_H}$ is a continuous function.
Let us consider $\Hc_{\Rc} \subseteq \Rc$ being the largest level set of $\hat H$ in $\Rc$.
\begin{proposition} \label{prop:ROA}
  If the learned PH model \eqref{eq:lPHS} has an asymptotically stable equilibrium at $\xeq$, $\Hc_{\Rc}$ is a forward invariant set such that all solutions $\xbf(t)$ of \eqref{eq:lPHS} starting from $\Hc_{\Rc}$ will asymptotically converge to $\xeq$.
\end{proposition}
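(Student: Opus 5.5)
The plan is a LaSalle-type argument on a sublevel set of $\hat H$, with $\hat H$ itself as the Lyapunov function. The condition defining $\Rc$ is used to show that $\xeq$ is the only critical point of $\hat H$ in $\Rc$.

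First I will make $\Hc_{\Rc}$ precise as a sublevel set, namely the connected component containing $\xeq$ of $\{\xbf : \hat H(\xbf) \le c^\star\}$, where $c^\star$ is the largest $c$ for which this component lies in $\Rc$. Since $\hat H(\xeq)=0$ and $\xeq$ is a strict minimum (proof of Theorem~\ref{lem_stability}), $c^\star>0$, and the component is compact because it lies inside the closed ball $\Om_b$. Forward invariance then follows from \eqref{eq:energy} with $\ubf=\0$ applied to \eqref{eq:lPHS}: $\dot{\hat H} = -\nabla^\top_{\xbf}\hat H\,\hat R\,\nabla_{\xbf}\hat H \le 0$. Along any trajectory starting in $\Hc_{\Rc}$, $\hat H$ is nonincreasing. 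By continuity of trajectories, the solution cannot jump to another component of the sublevel set. Hence it stays in $\Hc_{\Rc}$ and exists for all $t\ge 0$ by compactness.

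The key step is to show that $\nabla_{\xbf}\hat H(\xbf)\neq \0$ for every $\xbf\in\Rc\setminus\{\xeq\}$. On $\Om_b$ with $\Delta\xbf\ne\0$ we have $h(\sigma)>0$, and by \eqref{diff_Hr}
\[
\nabla_{\xbf}\hat H = h(\sigma)\Big(\nabla\FNN + \FNN\,\tfrac{h'(\sigma)}{h(\sigma)}\tfrac{\Delta\xbf}{\sqrt{\Vert\Delta\xbf\Vert_2^2+\delta^2}}\Big).
\]
Take a component $k$ with $|\Delta x_k|=\Vert\Delta\xbf\Vert_\infty$. The $k$-th entry of the second term has magnitude at least
\[
c_L\,\frac{h'(\sigma)\,\Vert\Delta\xbf\Vert_\infty}{h(\sigma)\sqrt{\Vert\Delta\xbf\Vert_2^2+\delta^2}}=\beta(\Delta\xbf),
\]
using $\FNN\ge c_L$ and $h'\ge 0$. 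This exceeds $\Vert\nabla\FNN\Vert_\infty\ge|\partial_k\FNN|$ on $\Rc$, so the $k$-th entry of the gradient is nonzero. The main obstacle sits here. One must check that $h'(\sigma)>0$ on $(0,\sigma_b)$, equivalently $g>0$, so that $\beta$ is positive and the bound has the right sign. One must also handle points of $\Rc$ where $h'=0$; these can only be boundary points with $\sigma=\sigma_b$, and there the strict inequality in \eqref{bound} fails, so they are excluded from $\Rc$. I would prove $h'>0$ from the known form of the smoothstep, whose derivative is a positive multiple of $\sigma^d(\sigma_b-\sigma)^d$.

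Finally, apply LaSalle's invariance principle on the compact invariant set $\Hc_{\Rc}$. Every trajectory converges to the largest invariant set contained in $E=\{\xbf\in\Hc_{\Rc}:\nabla^\top\hat H\,\hat R\,\nabla\hat H=0\}$. I will assume $\hat R(\xbf)\succ 0$ on $\Hc_{\Rc}$, which follows from $\hat R(\xeq)\succ0$ after shrinking by continuity, or can be read as part of the hypothesis. Then $E=\{\xbf:\nabla\hat H(\xbf)=\0\}\cap\Hc_{\Rc}=\{\xeq\}$ by the previous step, so every solution starting in $\Hc_{\Rc}$ converges to $\xeq$. If $\hat R$ is only semidefinite away from $\xeq$, I would instead argue that on an invariant subset of $E$ we have $\hat R\nabla\hat H=\0$, so $\dot\xbf=\hat J\nabla\hat H$. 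I would try to combine this with the asymptotic-stability hypothesis, but I expect to need the definiteness assumption in the form stated above.
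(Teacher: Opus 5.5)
Your route is the paper's: the inequality defining $\Rc$ rules out critical points of $\hat H$ in $\Rc\setminus\{\xeq\}$, and $\hat H$ is then used as a Lyapunov function on the sublevel set $\Hc_{\Rc}$. Your componentwise comparison at an index $k$ with $|\Delta x_k|=\Vert\Delta\xbf\Vert_\infty$ is equivalent to the paper's remark that $\Vert a\Vert_\infty>\Vert b\Vert_\infty$ forces $a\neq -b$. Your handling of compactness, connectedness and $h'>0$ on $(0,\sigma_b)$ is more careful than the paper's. One small fix: $\Rc$ is cut out by a strict inequality, so the supremum $c^\star$ need not itself be admissible; work with any $c<c^\star$.

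The issue you flagged at the end is genuine, and neither of your two ways around it proves the statement as written.
\begin{itemize}
\item Shrinking by continuity replaces $\Hc_{\Rc}$ by a smaller sublevel set. Moreover, $\hat R(\xeq)\succ0$ is not implied by the hypothesis: damped mechanical systems such as both of the paper's examples have $\hat R$ with a zero block.
\item Asymptotic stability of $\xeq$ cannot substitute for definiteness either. Take $n=2$, $\xeq=\0$ and $\FNN\equiv c>0$, so that $\nabla\FNN=\0$ and $\Rc\supseteq\{0<\Vert\xbf\Vert_2<b\}$. Let $\hat J=\left[\begin{smallmatrix}0&1\\-1&0\end{smallmatrix}\right]$ and $\hat R=T_RT_R^\top$ with $T_R=\max(0,r-\Vert\xbf\Vert_1)I$ for a small $r>0$. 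Then $\hat R(\0)\succ0$, so $\xeq$ is asymptotically stable by Theorem~\ref{lem_stability}. However, $\hat H=c\,h(\sigma(\Vert\xbf\Vert_2))$ is radial and strictly increasing in $\Vert\xbf\Vert_2$ on $\Omega_b$, and $\Hc_{\Rc}$ is a disc of radius close to $b$. Each circle $\Vert\xbf\Vert_2=s$ in it with $s>r$ has $\hat R=0$, so it is a periodic orbit of $\dot\xbf=\hat J\nabla_{\xbf}\hat H$ that never approaches $\xeq$.
\end{itemize}

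So an extra hypothesis is needed: $\hat R\succ0$ on $\Hc_{\Rc}$, or more generally that $\{\xeq\}$ is the largest invariant subset of $\{\xbf\in\Hc_{\Rc}:\nabla_{\xbf}^\top\hat H\,\hat R\,\nabla_{\xbf}\hat H=0\}$. The paper's proof uses this tacitly. It takes $\{\nabla_{\xbf}\hat H=\0\}$ as the LaSalle set, which equals $\{\dot{\hat H}=0\}$ only when $\hat R\succ0$. Its final sentence then infers convergence on all of $\Hc_{\Rc}$ from local asymptotic stability plus uniqueness of the critical point, which is exactly the inference the example breaks. With that assumption made an explicit hypothesis (your second option), your argument is complete and coincides with the paper's.
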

\begin{proof}
  Recall the forward invariant set
  $\Omega = \{\xbf \in \Rbb^n | \frac{\pa \hat H}{\pa \xbf} = \mathbf{0}\}$.
  By the definition of $\hat H$, $\xeq \in \Omega$.
  We will show that $\frac{\pa \hat H}{\pa \xbf} = \mathbf{0}$ has only one solution $\xeq$ in $\Hc_{\Rc}$.
  The definition of $\Rc$ leads to, for all $\xbf \in \Rc$ and $\xbf \neq \xeq$,
  \[
  \FNN(\xbf)  \frac{h^\prime(\sigma) \Vert \Delta\xbf \Vert_\infty }{\sqrt{\Vert \Delta\xbf \Vert_2^2 + \delta^2}} 
  > \left\Vert \frac{\pa \FNN}{\pa \xbf} \right\Vert_\infty h(\sigma)\text.
  \]
  Then
  $\FNN(\xbf)  \frac{h^\prime(\sigma) \Delta\xbf}{\sqrt{\Vert \Delta\xbf \Vert_2^2 + \delta^2}} 
  \neq -\frac{\pa \FNN}{\pa \xbf} h(\sigma)$.
  By \eqref{diff_Hr}, $\frac{\pa H}{\pa \xbf} \neq \mathbf{0}$ for all $\xbf \neq \xeq$.
  Thus, $\frac{\pa \hat H}{\pa \xbf} = \mathbf{0}$ admits a unique solution $\xbf = \xeq$ over $\Rc$.
  Because $\Hc_{\Rc}$ is a level set of $\hat H$ and $\xeq$ is an asymptotically stable equilibrium of \eqref{eq:lPHS}, every solution of $\eqref{eq:lPHS}$ starting from $\Hc_{\Rc}$ will remain in $\Hc_{\Rc}$ and asymptotically converge to $\xeq$.
\end{proof}

\begin{remark}
Although there possibly are other equilibria in the PHS, Proposition \ref{prop:ROA} shows that there is only one equilibrium in $\Hc_\Rc$.
Therefore, Proposition \ref{prop:ROA} allows us to design a region of attraction, 
 which preserves local asymptotic stability of the full-state system at $\xeq$.  
\end{remark}

\begin{remark}
  Contrary to methods employing ICNNs as described in \cite{rothStablePortHamiltonianNeural2025}, which ensure global stability of an equilibrium, this paper proposes a local approach aimed at preserving stability around a specific equilibrium. 
  This local method offers several potential advantages. 
  First, it accommodates scenarios where the equilibrium of interest may not be globally stable; imposing global stability in such scenarios could lead to overly conservative approximations when fitting globally stable equilibria in the learned model to locally stable equilibria present in the original system.
  Second, the local approach facilitates modeling systems exhibiting multiple equilibria using a single neural network, thus eliminating the complexity associated with integrating multiple local models. 
  Finally, by judiciously defining the region $\Rc$ as indicated in \eqref{bound}, it is possible to estimate 
  regions of attraction corresponding to a desired equilibrium.
\end{remark}

\section{Numerical examples}
\label{sec:result}

This section demonstrates the effectiveness and accuracy of the proposed method through two case studies, namely a Toda lattice system and a double pendulum system.
We compare our approach with PH-ICNN \cite{rothStablePortHamiltonianNeural2025}, a method based on input convex neural networks (ICNNs) for learning PH systems from data.
The code and supplementary materials for this paper are available at:
\url{https://github.com/EasternBoy/Stability-Preserving-PHS.git}

\subsection{Toda lattice system}

We consider the nonlinear Toda lattice 
model \cite{schwerdtnerAdaptiveSamplingStructurePreserving2021, gengDataDrivenReducedOrderModels2025}, that describes the motion of a chain of elements, each connected to its nearest neighbors with \textit{exponential springs} and dampers. 
The equations of motion for the $\ell$-particle Toda lattice with such interactions can be written in the form of a PHS as in \eqref{eq:PHS} with the corresponding matrices:
\begin{align*}
&J = \begin{bmatrix}
\bm{0} & I_\ell\\
-I_\ell & \bm{0}
\end{bmatrix},~
R = \begin{bmatrix}
\bm{0} & \bm{0}\\
\bm{0} & \mathrm{diag}(\gamma_1, \dots, \gamma_\ell)
\end{bmatrix} \in \Rbb^{2\ell \times 2\ell},
\\
&G = \begin{bmatrix}
\bm{0}\\
\mathbf{e}_1
\end{bmatrix} \in \Rbb^{2\ell},~
\text{and}~
\mathbf{e}_1 = [1,0,\dots,0]^\top \in \Rbb^\ell,
\end{align*}
where 
$I_\ell$ is the identity matrix in $\Rbb^{ \ell \times \ell}$ and $\gamma_i$ is the unknown damping coefficient related to the $i$-th particle in the system.
The states are $\mathbf{x} = [\mathbf{q}^\top, \pbf^\top]^\top$, where
$\mathbf{q} = [q_1, \dots, q_\ell]^\top$ and
$\mathbf{p} = [p_1, \dots, p_\ell]^\top$.
Here, $q_i$ and $p_i$ are the displacement of the $i$-th particle from its equilibrium position and the momentum, respectively.
The Hamiltonian of the Toda lattice system is the nonlinear and non-convex function
\begin{equation*}
H(\mathbf{x}) = \sum_{i=1}^{\ell} \frac{1}{2} p_i^2 + \sum_{i=1}^{\ell-1} 
e^{q_i - q_{i+1}} + e^{q_\ell} - q_1 - \ell + \epsilon(1-\cos(q_1)),
\end{equation*}
with $\epsilon > 0$.
Accordingly, $\left(\frac{\pa H}{\pa \xbf}\right)^\top = [e^{q_1 - q_2} - 1 + \epsilon \sin(q_1), e^{q_2 - q_3} - e^{q_1 - q_2},\dots,$ $e^{q_\ell} -  e^{q_{\ell-1} - q_\ell}, \pbf^\top ] = \bm{0}$ if and only if $\pbf = \bm{0}$ and $\mathbf{q} = \bm{0}$, thus the Toda lattice system has only one stable equilibrium at $\xeq = {\bf 0}$.
Assumption \ref{ass_hamil} is then verified by the above $H(\xbf)$ and $\frac{\pa H}{\pa \xbf}$.

\begin{figure*}
\subfloat[Hamiltonian\label{fig:toda_lattice_sin:a}]{\includegraphics[width=0.33 \linewidth]{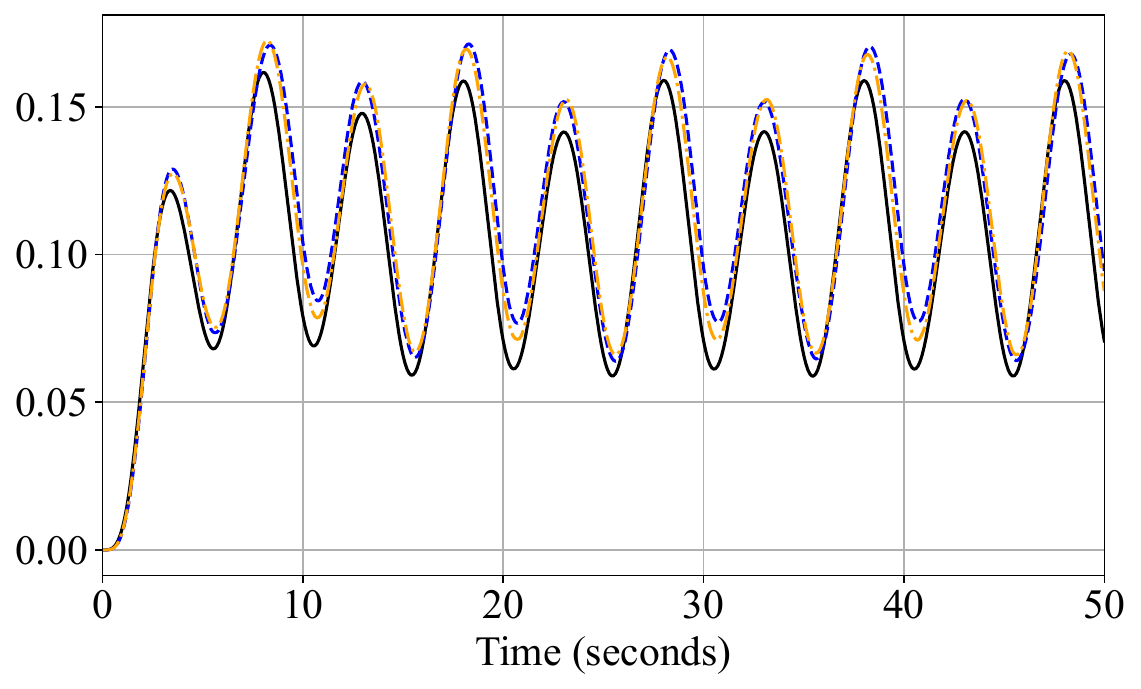}}
\subfloat[Output\label{fig:toda_lattice_sin:b}]{\includegraphics[width=0.33 \linewidth]{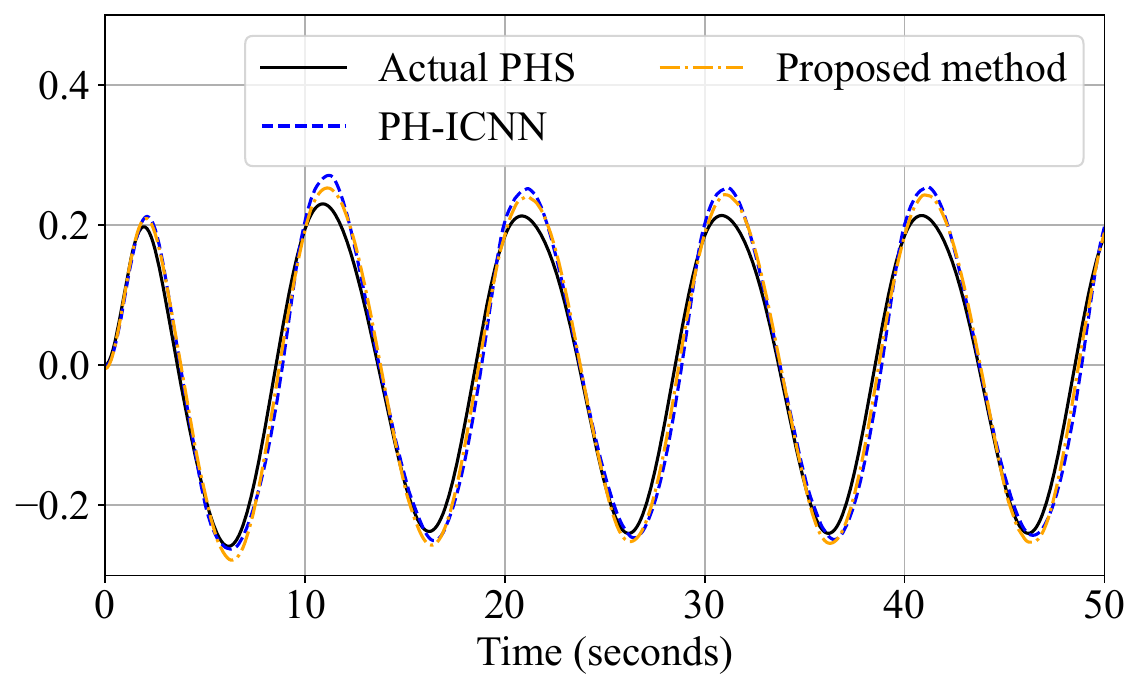}}
\subfloat[State magnitude $\Vert \xbf \Vert_2$\label{fig:toda_lattice_sin:c}]{\includegraphics[width=0.33 \linewidth]{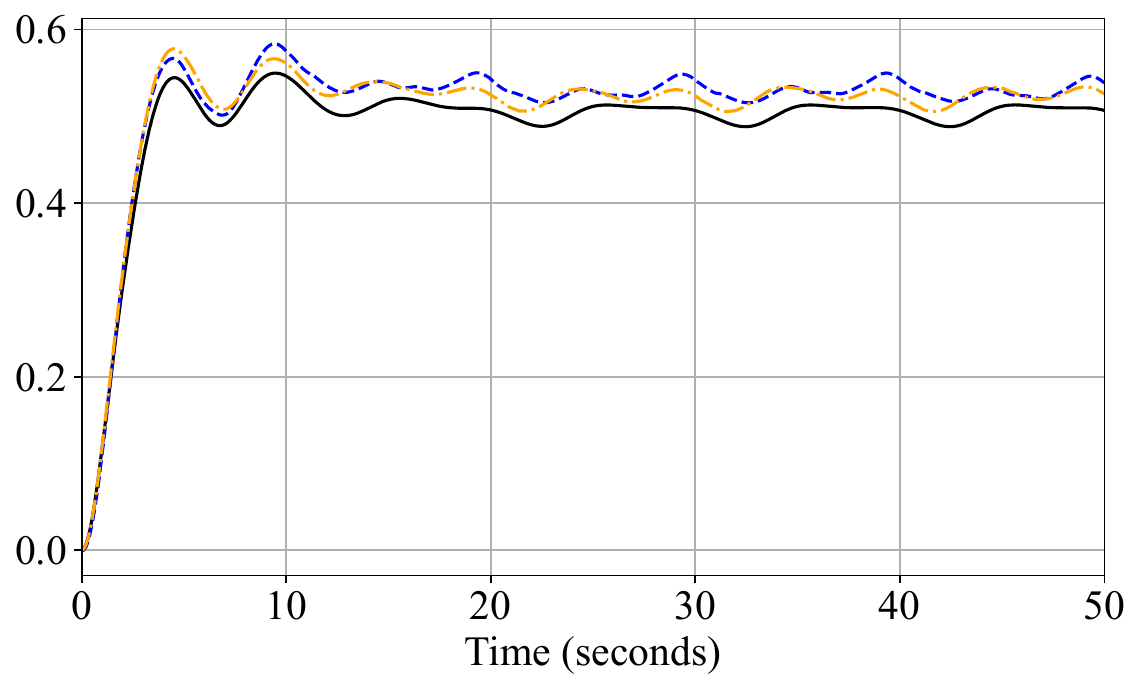}}
\caption{Responses of the actual PHS, PH-ICNN, and proposed models 
to the sinusoid test signal for the Toda lattice system.\label{fig:toda_lattice_sin}}
\end{figure*}

\begin{figure*}
\subfloat[Hamiltonian\label{fig:toda_lattice_pulse:a}]{\includegraphics[width=0.33 \linewidth]{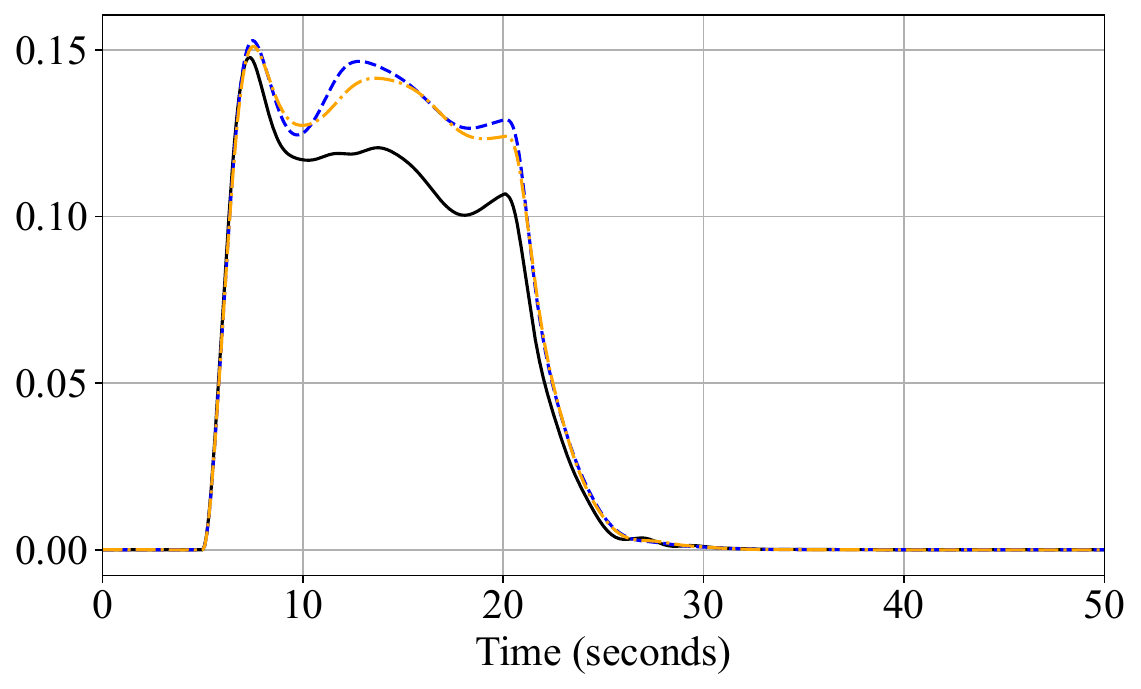}}
\subfloat[Output\label{fig:toda_lattice_pulse:b}]{\includegraphics[width=0.33 \linewidth]{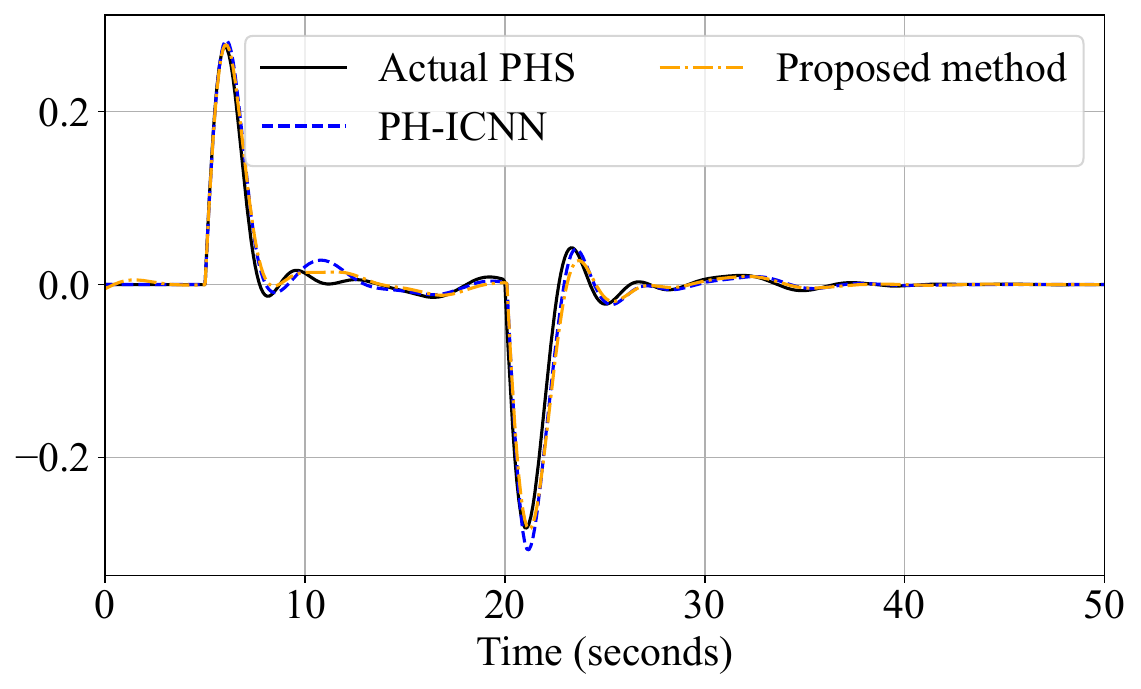}}
\subfloat[State magnitude $\Vert \xbf \Vert_2$\label{fig:toda_lattice_pulse:c}]{\includegraphics[width=0.33 \linewidth]{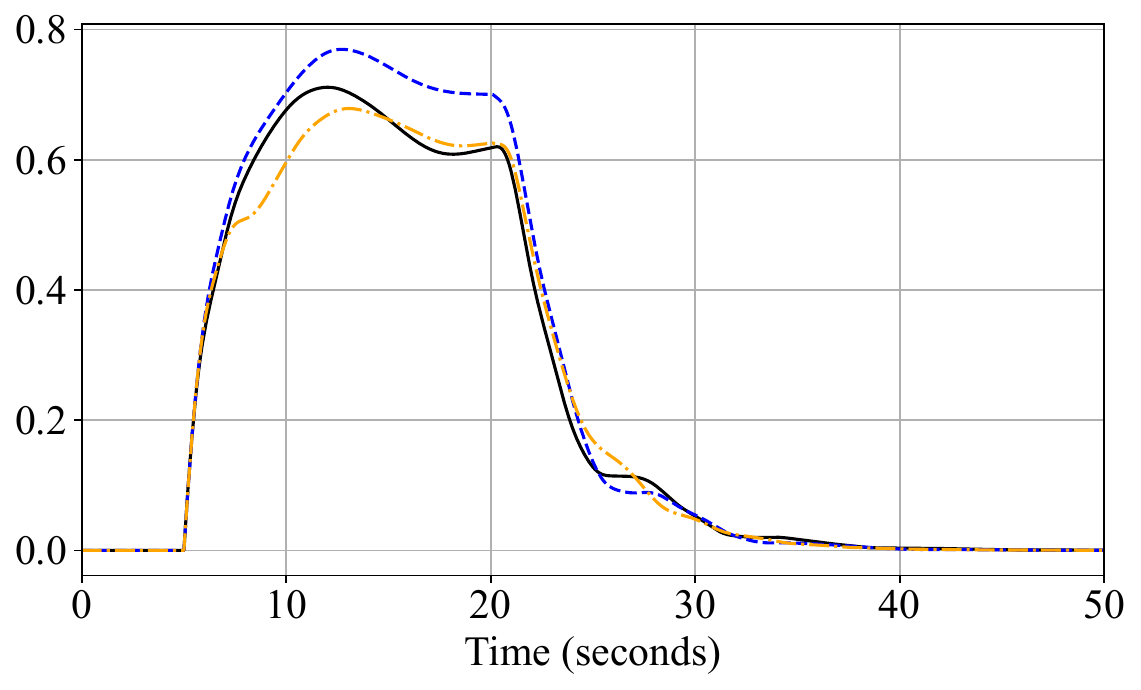}}
\caption{Responses of the actual PHS, PH-ICNN, and proposed models 
to the pulse test signal for the Toda lattice system.
\label{fig:toda_lattice_pulse}
}
\end{figure*}

In our experiment, the system dimension 
is set to \( \ell = 5 \), with system parameters $\gamma$ uniformly defined as \( \gamma_1 = \dots = \gamma_\ell = 0.5 \). 
The initial condition for the simulation is chosen as 
\( \mathbf{x}_0 = 0 \) and dimension $n =10$.
Training data was collected over the interval 
\( [0, 1000] \) seconds, with a sampling interval of \qty{0.1}{\second}. 
The system was excited by the input \( \mathbf{u}(t) = \mathbf{u}(t_k) = \text{rand}() \) for each interval \( t \in [t_k, t_{k+1}) \), where 
\( \text{rand}() \) represents a uniformly distributed random number in \([-1, 1]\). 
Numerical solutions to the differential equations were obtained using the Euler method.

In constructing the proposed Hamiltonian, the smooth step function $\sigma(\cdot)$ is set with order $d=2$ and radius $b=1$.
Two NNs are used to represent $\hat H(\xbf)$ as in Remark \ref{rm:relax}, each with four layers (two hidden layers of 32 neurons).
We evaluate the accuracy of the proposed PH model and compare it with PH-ICNN using two input signals. 
The first is a pulse signal of magnitude $0.5$ applied from \qty{5}{\second} to \qty{20}{\second}, and the second is a sinusoidal signal of magnitude $0.5$ with a period of \qty{10}{\second}. 
The initial state is set to $\mathbf{x}_\mathrm{eq} = \mathbf{0}$.

Fig. \ref{fig:toda_lattice_sin} shows the responses of PH-ICNN and the proposed learned PH model to the sinusoidal test signal, including the Hamiltonian, output, and state magnitude.
As shown in Figs.~\ref{fig:toda_lattice_sin:a} and \ref{fig:toda_lattice_sin:b}, the proposed model can accurately capture the responses of the Toda lattice system under the sinusoidal input. The Hamiltonian and output exhibit clear sinusoidal patterns. Furthermore, the results of the proposed learned PH model are more accurate (i.e., closer to the actual PHS) than those of the PH-ICNN model, as illustrated in Figs.~\ref{fig:toda_lattice_sin:a}--\ref{fig:toda_lattice_sin:c}.

Additional numerical results for the pulse test signal are presented in Fig.~\ref{fig:toda_lattice_pulse}.
The purpose of these results is to verify the preservation of stability at the origin equilibrium of the PHS in the proposed model.
From Figs.~\ref{fig:toda_lattice_pulse:a}--\ref{fig:toda_lattice_pulse:c}, the actual PHS, PH-ICNN, and the proposed model remain at the origin equilibrium until \qty{5}{s}, after which they are excited by the pulse input.
The Hamiltonian, output, and state are perturbed away from the origin and subsequently return to it, confirming the stability of the origin equilibrium in the proposed model.
Furthermore, the proposed model achieves more accurate results than PH-ICNN under both input scenarios, indicating the benefit of relaxing the convexity constraint in our method.

\subsection{Double pendulum system}

To validate the capability of the proposed method in preserving multiple stable equilibria,
we consider an autonomous double pendulum system with highly nonlinear dynamics and a non-convex Hamiltonian, leading to infinitely many equilibria.
In addition, comparisons with PH-ICNN are conducted to demonstrate superior performance of the proposed method in modeling the system.

\begin{figure}[!tb]
    \centering
    \includegraphics[width=0.55\linewidth]{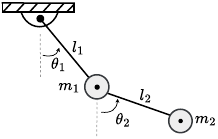}
    \caption{A double pendulum system.}
    \label{fig:DoubPen}
\end{figure}

\begin{figure*}[!t]
\subfloat[$\mathbf{x}_0^{(1)}$, $\xeq^{(1)}$\label{fig:3DoubPen:a}]{\includegraphics[width=0.33 \linewidth]{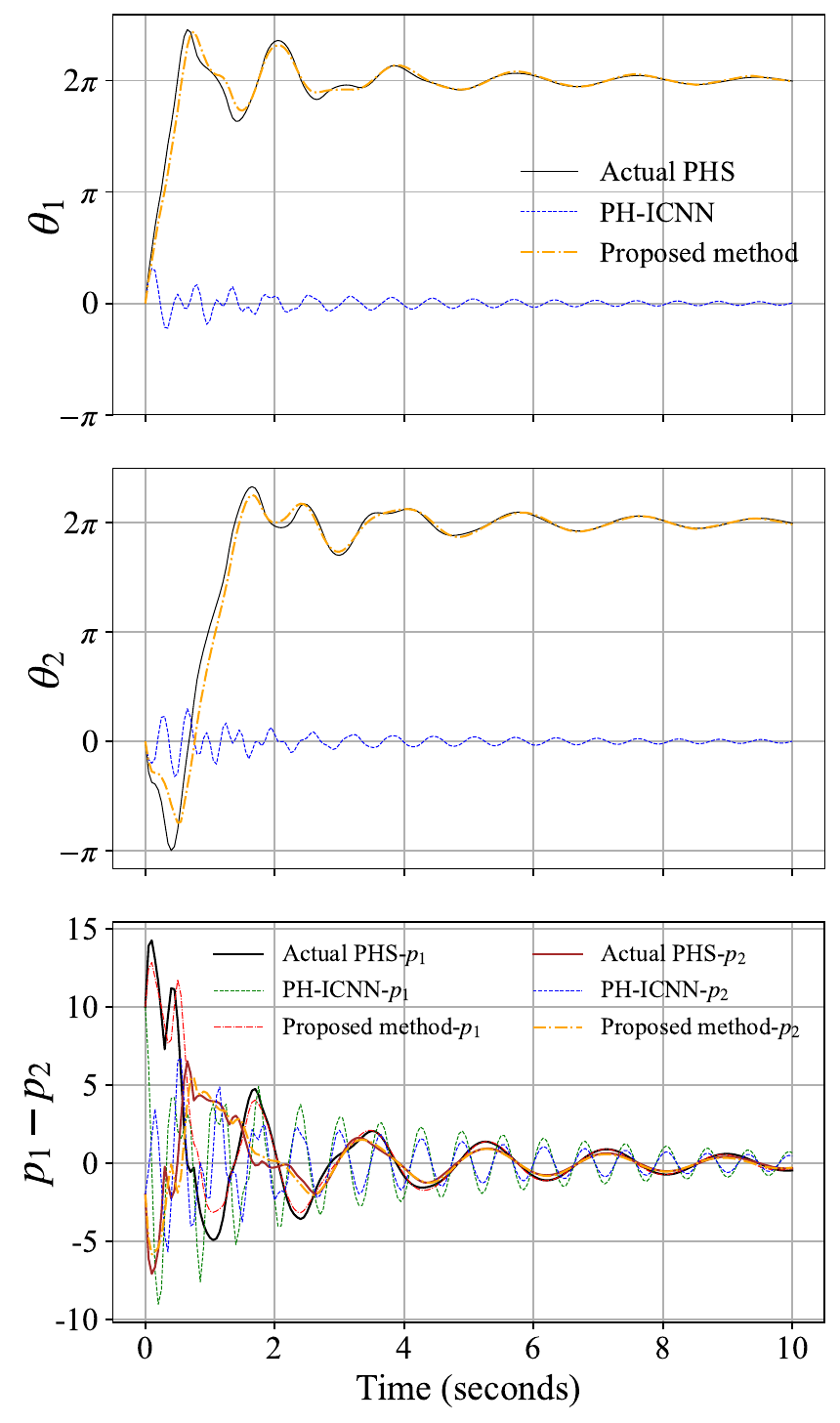}
}
\subfloat[$\mathbf{x}_0^{(2)}$, $\xeq^{(2)}$\label{fig:3DoubPen:b}]{\includegraphics[width=0.33 \linewidth]{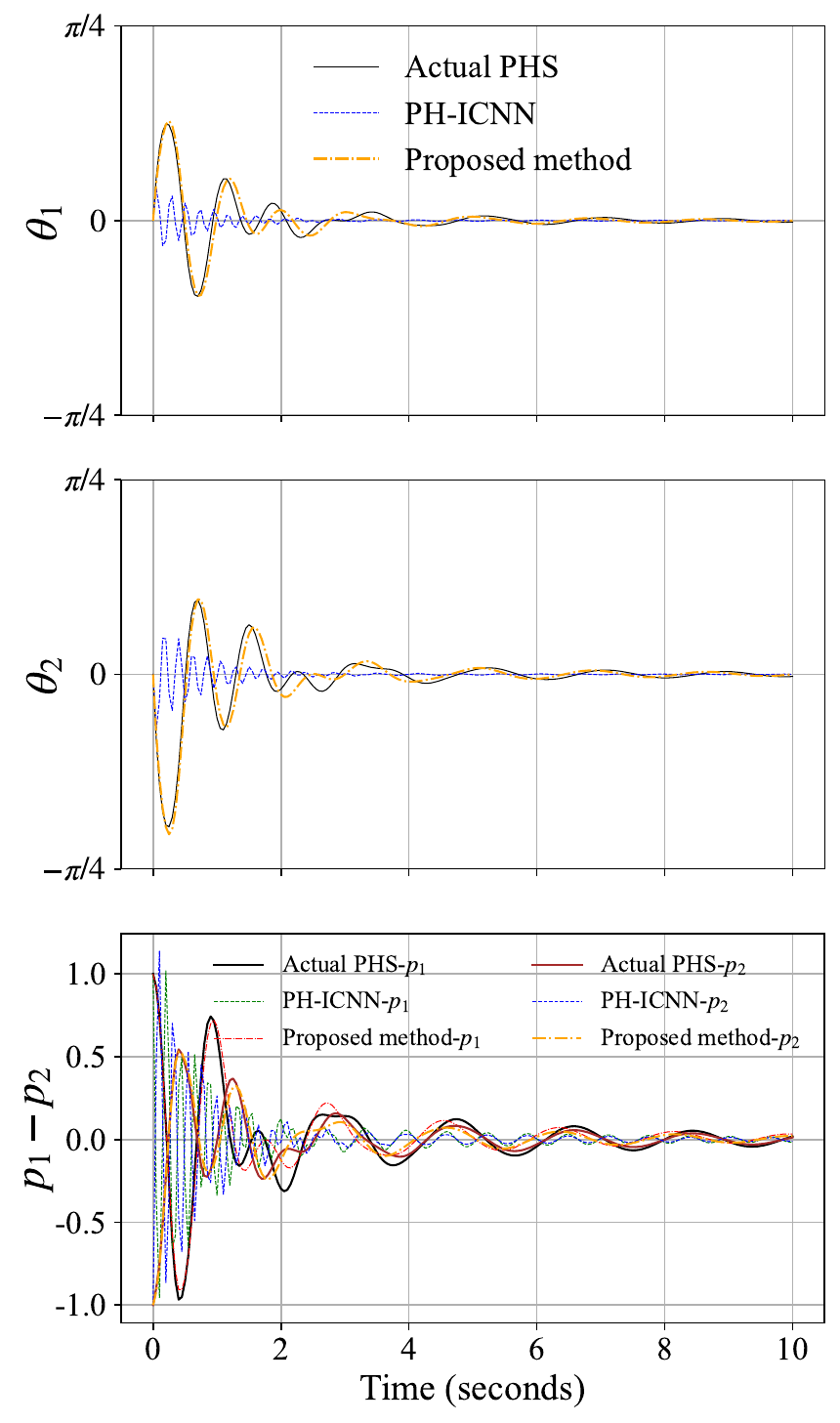}}
\subfloat[$\mathbf{x}_0^{(3)}$, $\xeq^{(3)}$\label{fig:3DoubPen:c}]{\includegraphics[width=0.33 \linewidth]{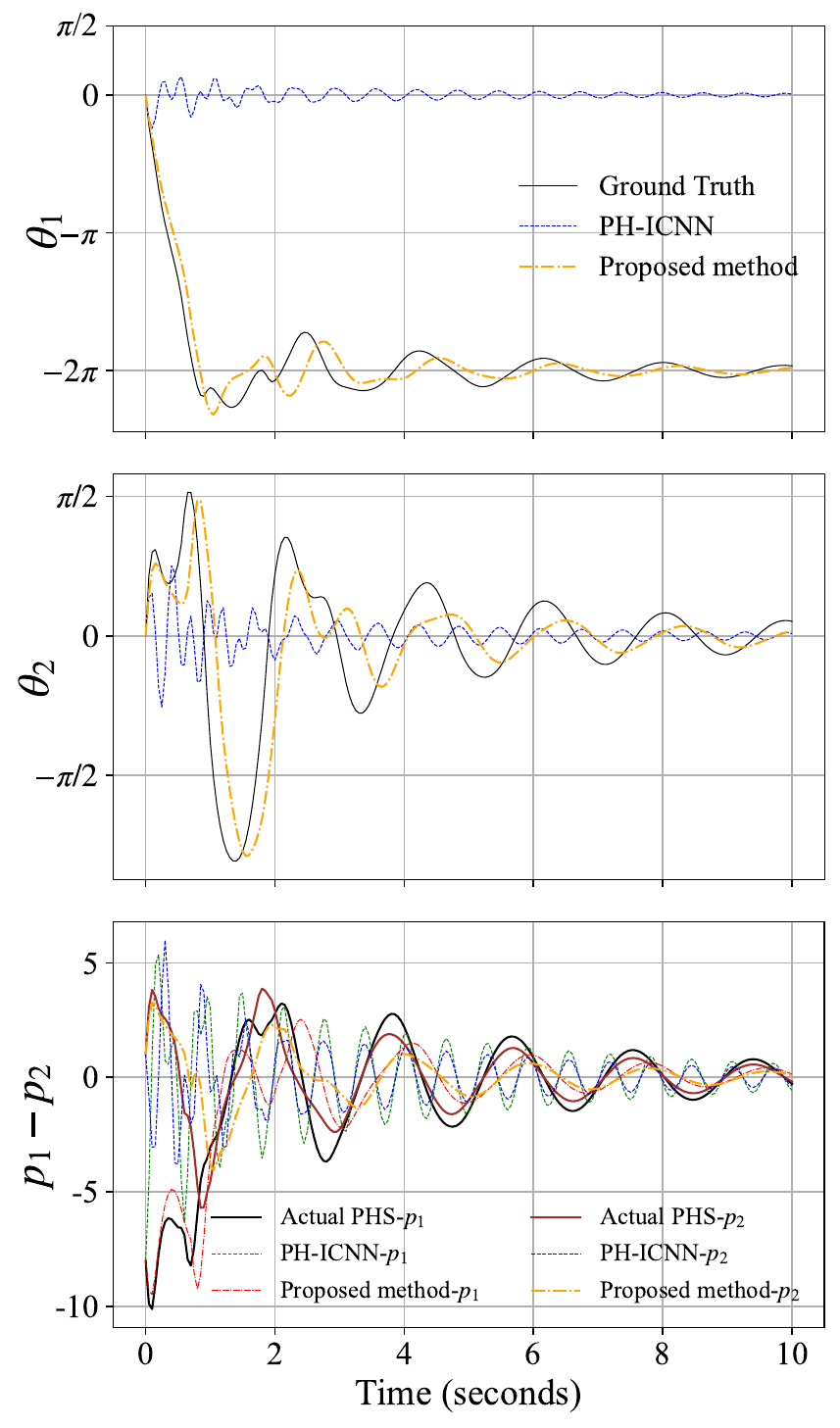}}
\caption{State trajectories of the double pendulum system from three different initial states 
$\mathbf{x}_0^{(1)} = [0, 0, 10, -2]^\top$, 
$\mathbf{x}_0^{(2)} = [0, 0,  1, -1]^\top$, 
and
$\mathbf{x}_0^{(3)} = [0, 0, -8,  1]^\top$.
These cases illustrate different stable equilibria $\xeq^{(1)} = [2\pi, 2\pi, 0, 0]^\top$,
$\xeq^{(2)} = [0, 0, 0, 0]^\top$, and
$\xeq^{(3)} = [-2\pi, 0, 0, 0]^\top$.}
\label{fig:3DoubPen}
\end{figure*}

Let $m_1$ and $m_2$ denote the masses of the two links,
$l_1$ and $l_2$ their corresponding lengths, and
$\theta_1$ and $\theta_2$ the angular positions of the links, as illustrated in Fig.~\ref{fig:DoubPen}.
Here, we consider the autonomous case where the control input is $\mathbf{u}=\bm{0}$.
The matrices $J$ and $R$ are given by
\begin{align*}
J &= \begin{bmatrix}
\bm{0} & I_2 \\
-I_2 & \bm{0}
\end{bmatrix}, \qquad
R = \begin{bmatrix}
\bm{0} & \bm{0} \\
\bm{0} & \mathrm{diag}(\gamma_1, \gamma_2)
\end{bmatrix},
\end{align*}
where $I_2$ denotes the $2 \times 2$ identity matrix, and $\gamma_1$ and $\gamma_2$ are the known damping coefficients of the double pendulum system.
The state vector is $\mathbf{x} = [\mathbf{q}^\top,\; \mathbf{p}^\top]^\top$, where
$\mathbf{q} = [\theta_1,\;\theta_2]^\top$ represents the generalized coordinates and
$\mathbf{p} = [p_1,\;p_2]^\top$ the corresponding generalized momenta. 
In this system, the Hamiltonian $H(\mathbf{x})$ represents the total energy of the system and is presented by
\begin{align*}
H(\mathbf{x}) =
\frac{1}{2}\mathbf{p}^\top M^{-1}(\mathbf{q}) \mathbf{p} + V(\mathbf{q}),
\end{align*}
where $\frac{1}{2}\mathbf{p}^\top M^{-1}(\mathbf{q}) \mathbf{p}$ corresponds to the kinetic energy and $V(\mathbf{q})$ denotes the potential energy.
The mass matrix $M(\mathbf{q})$ and the potential energy $V(\mathbf{q})$ are defined as follows
\begin{align*}
M(\mathbf{q}) &=
\begin{bmatrix}
(m_1+m_2)l_1^2 & m_2 l_1 l_2 \cos(\theta_1-\theta_2) \\
m_2 l_1 l_2 \cos(\theta_1-\theta_2) & m_2 l_2^2
\end{bmatrix}\text,
\\
V(\mathbf{q}) &=
-(m_1+m_2)g l_1 \cos\theta_1
-
m_2 g l_2 \cos\theta_2 .
\end{align*}
The initial state is
$\mathbf{x}_0 = [\mathbf{q}_0^\top, \mathbf{p}_0^\top]^\top$, and 
the stable equilibria $\mathbf{x}_\mathrm{eq} = [\mathbf{q}_\mathrm{eq}^\top, \; \mathbf{p}_\mathrm{eq}^\top]^\top$ are of the form
\begin{align*}
    \mathbf{q}_\mathrm{eq} = \begin{bmatrix}
        2n_z \pi \\
        2m_z\pi
    \end{bmatrix}\text, \quad
    \mathbf{p}_\mathrm{eq} = \begin{bmatrix}
        0 \\ 
        0
    \end{bmatrix} \text, \quad n_z, m_z \in \mathbb{Z} \text.
\end{align*}
Unstable equilibria are not considered in this example.

Training data are generated by simulating the double pendulum system from multiple initial conditions. 
The initial angles are fixed to $\mathbf{q}_0 = [0,\,0]^\top$, while the initial momenta are sampled over the domain $\mathbf{p}_0 \in [-10,\,10] \times [-2,\,2]$. 
A $10 \times 10$ mesh grid is used to generate the set of initial states.
The damping coefficients are set to $\gamma_1 = \gamma_2 = 0.5$ and are assumed to be known.
The physical parameters of the double pendulum are $m_1 = m_2 = \qty{2}{\kilogram}$ and $l_1 = l_2 = \qty{0.5}{\meter}$.
For each initial condition, the system is simulated over a time horizon of $\qty{10}{\second}$ with a sampling interval $\Delta t = \qty{0.05}{\second}$, yielding 200 samples per trajectory.
In total, the training dataset contains \num{20000} samples.
The NNs used to learn the Hamiltonian $H(\mathbf{x})$ has two hidden layers, each with 64 nodes. 
The function $\sigma(\cdot)$ is structured with the order $d = 2$ and the radius $b = 0.5$.
Nine equilibrium points are enforced in our proposed model, given by $\mathbf{x}_\mathrm{eq} = \{ \tilde{\theta}_1, \tilde{\theta}_2, 0, 0 \}$, where $\tilde{\theta}_1, \tilde{\theta}_2 \in \{-2\pi, 0, 2\pi\}$.

Fig.~\ref{fig:3DoubPen} show the trajectories of the angles $\theta_1$ and $\theta_2$ for three cases with different initial states of the double pendulum system.
For different initial states, the system converges to different equilibrium points.

The first case in Fig.~\ref{fig:3DoubPen:a} corresponds to a configuration in which both links stabilize at non–origin equilibria.
In this scenario, the PH-ICNN model fails to reproduce the correct trajectories and instead converges to the origin.
This behavior arises from the structural property of the PH-ICNN model, which enforces a single equilibrium point.
In contrast, the proposed model successfully converges to the correct equilibrium $[\theta_1, \theta_2, p_1, p_2] = [2\pi, 2\pi, 0, 0]$ and accurately reproduces the trajectories compared with the ground truth.

For the second case in Fig.~\ref{fig:3DoubPen:b}, where the equilibrium is located at the origin, the proposed model still provides a more accurate approximation of the system dynamics than does the PH-ICNN model.
Although the PH-ICNN model eventually converges to the correct equilibrium, it produces incorrect trajectories during the initial transient period (approximately the first 5 seconds).

Finally, in the third case shown in Fig.~\ref{fig:3DoubPen:c}, where the equilibrium is at $\theta_{\mathrm{eq}}^{(1)} = -2\pi$ and $\theta_{\mathrm{eq}}^{(2)} = 0$, the proposed model again closely matches the ground truth dynamics, whereas PH-ICNN fails to capture the correct system behavior.

\section{Conclusion}
\label{sec:conclusion}
This paper addresses the problem of modeling port-Hamiltonian (PH) systems from data while preserving the PH structure and ensuring system stability by incorporating stable equilibria information into the learning process.
Theoretical analysis shows that the proposed data-driven approach guarantees stability at the specified equilibria while avoiding the convexity constraints imposed by ICNN-based Hamiltonian learning.
The effectiveness of the proposed method is demonstrated by two PH systems with distinct characteristics: one with a single stable equilibrium and the other with multiple stable equilibria.
The results show that the proposed model captures more accurately the system dynamics and significantly improves performance compared with the PH-ICNN approach.
Although the proposed method achieves promising results in the considered benchmarks, its applicability to real-world systems has not yet been investigated.
Future work will extend the proposed method to control applications and evaluate its performance on practical physical systems.

\section*{Appendix}
\subsection{Proof of Lemma~\ref{lem:unique} \label{sec:proofLem1}}
Because $H$ is continuously differentiable and $\xeq$ is a strict local minimum of $H$, 
the first-order necessary condition gives
$\gradH(\xeq)=\mathbf{0}$. 
In addition, consider $\Bc(\xeq,\varepsilon)$ as in \eqref{eq:strictly} since $\xeq$ is a strict local minimum of $H$,
let $\xbf \in \Bc(\xeq,\varepsilon)$ satisfy
$\gradH(\xbf)= \mathbf{0}$.
We want to prove that $\xbf = \xeq$.
Assume by contradiction that $\xbf \neq \xeq$ and $\gradH(\xbf)=\mathbf{0}$.

Define
$G(\xbf):=H(\xbf)- \alpha(\|\xbf-\xeq\|_2)$.
Then for all $\xbf\in B(\xeq,\varepsilon)\setminus \{\xeq\}$,
$G(\xbf) > G(\xeq)$,
because
$H(\xbf) \ge H(\xeq)+ \alpha(\|\xbf-\xeq\|_2)$.
Thus $\xeq$ is a local minimum of $G$ on the ball. 
Since $G$ is continuously differentiable, we have
$\nabla_\xbf \mathbf G(\xeq)= \mathbf{0}$.  
Consider the line segment $\boldsymbol{\ell}(t)=\xeq+t(\xbf-\xeq), t\in[0,1]$.
Because the ball is convex, $\boldsymbol{\ell}(t) \in B(\xeq,\varepsilon)$.
Define $\phi(t):=H(\boldsymbol{\ell}(t))$, then $\phi(t)$ is continuously differentiable, and by strict local minimality \eqref{eq:strictly},
$\phi(t)\ge \phi(0) +  \alpha(t\|\xbf-\xeq\|_2), \forall t\in[0,1]$.
Moreover, $\frac{d \phi}{d t}(0)=\gradH(\xeq)^\top (\xbf-\xeq)=0$ and
$\frac{d \phi}{d t}(1)=\gradH(\xbf)^\top (\xbf-\xeq)=0$ since $\gradH(\xbf) = \gradH(\xeq)= \mathbf{0}$.

Now apply Rolle’s theorem to $\frac{d \phi}{d t}$ informally via the growth condition: since
$\phi(t)-\phi(0) \ge \alpha(t\|\xbf-\xeq\|_2$),
the function $\phi(t)$ must increase strictly away from $0$ as $t$ increases, and therefore it cannot return to a critical point at $t=1$ unless $\xbf = \xeq$. Otherwise, $t=1$ would correspond to another stationary point inside the same neighborhood, contradicting the growth from the $\alpha$.
Hence $\xbf=\xeq$, and $\xeq$ is the unique solution of
$\gradH(\xbf)= \mathbf{0}$
in $\Bc(\xeq,\varepsilon)$.

\subsection{Proof of Lemma \ref{lem:local_stable} \label{sec:proofLem2}}
  Intuitively, $\xeq$ is an equilibrium point of \eqref{eq:aPHS} since $\dot\xbf = {\bf 0}$ at $\xbf = \xeq$ due to {Assumption~\ref{ass_hamil}}.
  We choose $H(\xbf)$ as the Lyapunov function as in \eqref{eq:energy}, which is non-increasing over time.
  Since $\xeq$ is a strict minimum of $H(\xbf)$, let us define 
  $\alpha$
  and $\Bc(\xeq,\varepsilon)$ as in \eqref{eq:strictly}.
  We will prove two statements:
  \begin{itemize}
    \item[(i)]  $\xbf(t) = \xeq$ is a unique solution of $\gradH(\xbf) = \mathbf{0}$ over the $\Bc(\xeq,\varepsilon)$ in \eqref{eq:strictly}; 
    \item[(ii)] For any initial state $\xbf(0) \neq \xeq$ in a subset of $ \Bc(\xeq,\varepsilon)$, $\xbf(t)$ remains in $\Bc(\xeq,\varepsilon)$.
  \end{itemize}
  First, (i) is satisfied due to {Assumption~\ref{ass_hamil}} and Lemma \ref{lem:unique}.
  For (ii), consider the level set $\Sc_{H\delta} = \{\xbf \in \Bc(\xeq,\varepsilon)| H(\xbf) \leq \delta_\varepsilon + H(\xeq)\}$, where $\delta_\varepsilon$ is a constant depending on $\varepsilon$.
  From \eqref{eq:strictly}, we have $\Sc_{H\delta} \subseteq \Sc_{\alpha\delta} = \{\xbf \in \Bc(\xeq,\varepsilon)| \alpha(\norm{\xbf - \xeq}_2) \leq \delta_\varepsilon\} \subseteq \Bc(\xeq,\varepsilon)$, which are all bounded.
  Consequently, every $\xbf(t)$ starting in $\Sc_{H\delta}$ will remain in $\Sc_{H\delta} \subseteq \Bc(\xeq,\varepsilon)$ since $\dot H(\xbf)$ is non-positive.
  Therefore, $\xeq$ is a stable equilibrium according to {Definition \ref{def}}.
  According to LaSalle's invariant set theorem \cite{krstic1995nonlinear}, all solutions of \eqref{eq:PHS} will converge to $\Om = \{\xbf| \big(\gradH(\xbf)\big)^\top R(\xbf) \big(\gradH(\xbf)\big) = \mathbf{0}\} $.
  Moreover, from {Lemma \ref{lem:unique}}, if $R(\xeq) \succ 0$, then there exists $\varepsilon > 0$ such that $\big(\gradH(\xbf)\big)^\top R(\xbf) \big(\gradH(\xbf)\big) \succ 0$ for all $\xbf \in B(\xeq,\varepsilon) \setminus \xeq$.
  Hence, if $\xbf(t)$ starts in $S_{H\delta}$, it remains in $S_{H\delta}$ and converges to $\xeq$. 
  Therefore, $\xeq$ is an asymptotically stable equilibrium.

\bibliographystyle{IEEEtran}
\bibliography{RefPHs}

@inproceedings{altawaitanHamiltonianDynamicsLearning2024,
  title = {Hamiltonian Dynamics Learning from Point Cloud Observations for Nonholonomic Mobile Robot Control},
  booktitle = {2024 {{IEEE Int. Conf.}} on {{Robotics}} and {{Automation}} ({{ICRA}})},
  author = {Altawaitan, Abdullah and Stanley, Jason and Ghosal, Sambaran and Duong, Thai and Atanasov, Nikolay},
  year = {2024},
  month = may,
  pages = {16937--16944},
  doi = {10.1109/ICRA57147.2024.10610395},
  urldate = {2024-12-20}
}

@inproceedings{o2025port,
  title={A port-{Hamiltonian} formulation of mechanical systems with switching contact constraints},
  author={O’Brien, Thomas and Ferguson, Joel and Donaire, Alejandro},
  booktitle={2025 European Control Conf. (ECC)},
  pages={1918--1924},
  year={2025},
  organization={IEEE}
}

@inproceedings{bartel2022port,
  title={Port-{Hamiltonian} systems’ modelling in electrical engineering},
  author={Bartel, Andreas and Clemens, Markus and G{\"u}nther, Michael and Jacob, Birgit and Reis, Timo},
  booktitle={Int. Conf. on Scientific Computing in Electrical Engineering},
  pages={133--143},
  year={2022},
  organization={Springer}
}

@inproceedings{lee2012dynamics,
  title={Dynamics and control of a chain pendulum on a cart},
  author={Lee, Taeyoung and Leok, Melvin and McClamroch, N Harris},
  booktitle={2012 IEEE 51st IEEE Conf. on Decision and Control (CDC)},
  pages={2502--2508},
  year={2012},
  organization={IEEE}
}

@misc{sanchezescalonilla2024robustneuralidapbcpassivitybased,
      title={Robust Neural {IDA-PBC}: passivity-based stabilization under approximations}, 
      author={Santiago Sanchez-Escalonilla and Samuele Zoboli and Bayu Jayawardhana},
      year={2024},
      eprint={2409.16008},
      archivePrefix={arXiv},
      primaryClass={eess.SY},
      url={https://arxiv.org/abs/2409.16008}, 
}

@article{ge2004position,
  title={Position control of chained multiple mass-spring-damper systems-Adaptive output feedback control approaches},
  author={Ge, SS and Huang, L and Lee, TH},
  journal={Int. Journal of Control Automation and Systems},
  volume={2},
  pages={144--155},
  year={2004},
  publisher={Korean Institute of Electrical Engineers}
}

@article{rettberg2025data,
  title={Data-driven identification of latent port-{Hamiltonian} systems},
  author={Rettberg, Johannes and Kneifl, Jonas and Herb, Julius and Buchfink, Patrick and Fehr, J{\"o}rg and Haasdonk, Bernard},
  journal={Computational Science and Engineering},
  volume={2},
  number={1},
  pages={4},
  year={2025},
  publisher={Springer}
}

@article{cherifi2025nonlinear,
  title={Nonlinear port-{Hamiltonian} system identification from input-state-output data},
  author={Cherifi, Karim and Messaoudi, Achraf El and Gernandt, Hannes and Roschkowski, Marco},
  journal={arXiv preprint arXiv:2501.06118},
  year={2025}
}

@article{nageshrao2015port,
  title={Port-{Hamiltonian} systems in adaptive and learning control: {A} survey},
  author={Nageshrao, Subramanya P and Lopes, Gabriel AD and Jeltsema, Dimitri and Babu{\v{s}}ka, Robert},
  journal={IEEE Trans. on Automatic Control},
  volume={61},
  number={5},
  pages={1223--1238},
  year={2015},
  publisher={IEEE}
}

@inproceedings{amos2017input,
  title={Input convex neural networks},
  author={Amos, Brandon and Xu, Lei and Kolter, J Zico},
  booktitle={Int. Conf. on machine learning},
  pages={146--155},
  year={2017},
  organization={PMLR}
}

@article{kotyczka2019discrete,
  title={Discrete-time port-{Hamiltonian} systems: {A} definition based on symplectic integration},
  author={Kotyczka, Paul and Lefevre, Laurent},
  journal={Systems \& Control Letters},
  volume={133},
  pages={104530},
  year={2019},
  publisher={Elsevier}
}

@inproceedings{rothStablePortHamiltonianNeural2025,
  title={Stable Port-{Hamiltonian} Neural Networks},
  author={Roth, Fabian J and Klein, Dominik K and Kannapinn, Maximilian and Peters, Jan and Weeger, Oliver},
  booktitle={The Thirty-ninth Annual Conf. on Neural Information Processing Systems},
  year={2025}
}

@article{gengDataDrivenReducedOrderModels2025,
  title = {Data-driven reduced-order models for port-{Hamiltonian} systems with operator inference},
    journal = {Computer Methods in Applied Mechanics and Engineering},
    volume = {442},
    pages = {118042},
    year = {2025},
    issn = {0045-7825},
}

@article{schwerdtnerAdaptiveSamplingStructurePreserving2021,
  title = {Adaptive {{Sampling}} for {{Structure-Preserving Model Order Reduction}} of {{Port-Hamiltonian Systems}}⁎},
  author = {Schwerdtner, Paul and Voigt, Matthias},
  year = {2021},
  month = jan,
  journal = {IFAC-PapersOnLine},
  series = {7th {{IFAC Workshop}} on {{Lagrangian}} and {{Hamiltonian Methods}} for {{Nonlinear Control LHMNC}} 2021},
  volume = {54},
  number = {19},
  pages = {143--148},
  issn = {2405-8963},
  doi = {10.1016/j.ifacol.2021.11.069},
  urldate = {2025-03-30}
}

@incollection{vanderschaftPortHamiltonianSystemsNetwork2004,
  title = {Port-{{Hamiltonian Systems}}: {{Network Modeling}} and {{Control}} of {{Nonlinear Physical Systems}}},
  shorttitle = {Port-{{Hamiltonian Systems}}},
  booktitle = {Advanced {{Dynamics}} and {{Control}} of {{Structures}} and {{Machines}}},
  author = {{van der Schaft}, A. J.},
  editor = {Irschik, Hans and Schlacher, Kurt},
  year = {2004},
  pages = {127--167},
  publisher = {Springer},
  address = {Vienna},
  doi = {10.1007/978-3-7091-2774-2_9},
  urldate = {2025-03-31},
  isbn = {978-3-7091-2774-2},
  langid = {english}
}

@book{khalil2002nonlinear,
  title={Nonlinear systems},
  author={Khalil, HK},
  publisher={Prentice Hall},
  year={2002}
}

@article{nguyen2020distributed,
  title={Distributed flocking bounded control of second-order dynamic multiple polygonal agents},
  author={Nguyen, Thanh Binh and Kim, Sung Hyun},
  journal={IEEE Access},
  volume={8},
  pages={200170--200179},
  year={2020},
  publisher={IEEE}
}

@book{krstic1995nonlinear,
  title={Nonlinear and Adaptive Control Design},
  author={Krstic, M},
  publisher={John Willey, New York},
  year={1995}
}

@article{desaiPortHamiltonianNeuralNetworks2021,
  title = {Port-{{Hamiltonian}} Neural Networks for Learning Explicit Time-Dependent Dynamical Systems},
  author = {Desai, Shaan A. and Mattheakis, Marios and Sondak, David and Protopapas, Pavlos and Roberts, Stephen J.},
  year = {2021},
  month = sep,
  journal = {Physical Review E},
  volume = {104},
  number = {3},
  pages = {034312},
  issn = {2470-0045, 2470-0053},
  doi = {10.1103/PhysRevE.104.034312},
  urldate = {2024-12-06},
  langid = {english}
}

@article{duongPortHamiltonianNeuralODE2024,
  title = {Port-{{Hamiltonian}} Neural {{ODE}} Networks on Lie Groups for Robot Dynamics Learning and Control},
  author = {Duong, Thai and Altawaitan, Abdullah and Stanley, Jason and Atanasov, Nikolay},
  year = {2024},
  journal = {IEEE Trans. on Robotics},
  volume = {40},
  pages = {3695--3715},
  issn = {1941-0468},
  doi = {10.1109/TRO.2024.3428433},
  urldate = {2024-12-05}
}

@article{eidnesPseudoHamiltonianNeuralNetworks2023,
  title = {Pseudo-{Hamiltonian} neural networks with state-dependent external forces},
journal = {Physica D: Nonlinear Phenomena},
volume = {446},
pages = {133673},
year = {2023},
issn = {0167-2789},
author = {Sølve Eidnes and Alexander J. Stasik and Camilla Sterud and Eivind Bøhn and Signe Riemer-Sørensen}
}

@article{rashadEnergyAwareImpedance2022,
  title = {Energy Aware Impedance Control of a Flying End-Effector in the Port-{{Hamiltonian}} Framework},
  author = {Rashad, Ramy and Bicego, Davide and Zult, Jelle and {Sanchez-Escalonilla}, Santiago and Jiao, Ran and Franchi, Antonio and Stramigioli, Stefano},
  year = {2022},
  month = dec,
  journal = {IEEE Trans. on Robotics},
  volume = {38},
  number = {6},
  pages = {3936--3955},
  issn = {1941-0468},
  doi = {10.1109/TRO.2022.3183532},
  urldate = {2024-12-13}
}

@article{rashadTwentyYearsDistributed2020,
  title = {Twenty Years of Distributed Port-{{Hamiltonian}} Systems: A Literature Review},
  shorttitle = {Twenty Years of Distributed Port-{{Hamiltonian}} Systems},
  author = {Rashad, Ramy and Califano, Federico and {van der Schaft}, Arjan J and Stramigioli, Stefano},
  year = {2020},
  month = dec,
  journal = {IMA Journal of Mathematical Control and Information},
  volume = {37},
  number = {4},
  pages = {1400--1422},
  issn = {1471-6887},
  doi = {10.1093/imamci/dnaa018},
  urldate = {2024-12-10}
}

@article{schaftPortHamiltonianSystemsTheory2014,
  title = {Port-{{Hamiltonian Systems Theory}}: {{An Introductory Overview}}},
  shorttitle = {Port-{{Hamiltonian Systems Theory}}},
  author = {van der Schaft, Arjan and Jeltsema, Dimitri},
  year = {2014},
  month = jun,
  journal = {Foundations and Trends{\textregistered} in Systems and Control},
  volume = {1},
  number = {2-3},
  pages = {173--37 8},
  publisher = {Now Publishers, Inc.},
  issn = {2325-6818, 2325-6826},
  doi = {10.1561/2600000002},
  urldate = {2024-12-11},
  langid = {english}
}

@article{zhongPortHamiltonianControlFramework2022,
  title = {A {{Port-Hamiltonian Control Framework}} to {{Render}} a {{Power Electronic System Passive}}},
  author = {Zhong, Qing-Chang and Stefanello, M{\'a}rcio},
  year = {2022},
  month = apr,
  journal = {IEEE Trans. on Automatic Control},
  volume = {67},
  number = {4},
  pages = {1960--1965},
  issn = {1558-2523},
  doi = {10.1109/TAC.2021.3069389},
  urldate = {2025-01-06}
}

\end{document}